\titlespacing{\section}{0pt}{*1}{*}
\titlespacing{\subsection}{0pt}{*1}{*}
\titlespacing{\subsubsection}{0pt}{*1}{*}
\titlespacing{\paragraph}{0pt}{*1}{*}
\def\flex{Flex\xspace}
\newcommand{\desc}[1]{}
\def\BState{\State\hskip-\ALG@thistlm}
\def\vgg16{{VGG16}}
\def\inception3{{Inception3}}
\def\resnet50{{Resnet50}}
\titlespacing*{\section}{0pt}{0pt}{1.5pt}
\titlespacing*{\subsection}{0pt}{0pt}{1pt}
\titlespacing*{\subsubsection}{0pt}{0pt}{0.5pt}
\patchcmd{\ttlh@hang}{\parindent\z@}{\parindent\z@\leavevmode}{}{}
\patchcmd{\ttlh@hang}{\noindent}{}{}{}
\renewcommand\footnotetextcopyrightpermission[1]{} 
  \providecommand\BibTeX{{%
    \normalfont B\kern-0.5em{\scshape i\kern-0.25em b}\kern-0.8em\TeX}}}
\begin{document}

\title{\flex: Closing the Gaps between Usage and Allocation}


\author{Tan N. Le$^{1,2}$ and Zhenhua Liu$^1$}
\affiliation{%
	\institution{Stony Brook University$^1$, SUNY Korea$^2$}
}

\begin{abstract}
Data centers are giant factories of Internet data and services. Worldwide data centers consume energy and emit emissions more than airline industry.
Unfortunately, most of data centers are significantly underutilized.
One of the major reasons is the big gaps between the real usage and the provisioned resources because users tend to over-estimate their demand and data center operators often rely on users' requests for resource allocation.
In this paper, we first conduct an in-depth analysis of a Google cluster trace to unveil the root causes for low utilization and highlight the great potential to improve it. 
We then developed an online resource manager \flex to maximize the cluster utilization while satisfying the Quality of Service (QoS). 
Large-scale evaluations based on real-world traces show that \flex admits up to $1.74\times$ more requests and $1.6\times$ higher utilization compared to tradition schedulers while maintaining the QoS.
\end{abstract}

%

\keywords{Resource Allocation, Task Schedulers, Distributed Systems.}

\maketitle

\section{Introduction}

Data centers are becoming the factories of Internet data and services.
There are millions of data centers in the world serving the digital demand from personal and industrial users, bringing great changes to our society. This comes at the cost of significant electricity consumption and environmental impacts. 
In total, data centers worldwide consume more than 3 percent of the world electricity and emit 2 percent of the global emissions. This is more than the airline industry \cite{dc_go_green}.
While great advancements have been made regarding data center cooling and renewalble integration, it is critical to make full use of the costly data centers.

Unfortunately, data centers are inefficient in terms of both energy consumption and operation.
Many data centers do not adopt the latest technologies.
For instance, new CPUs can provide more computational power while using less electricity than previous generations.
This is important as small and medium size data centers with old technologies are responsible for 49\% of server electricity consumption in the United States \cite{delforge2014america}. 
Moreover, servers are reported to have as low utilization as 12\% to 18\% of their capacities \cite{delforge2014america}.
One of the major reasons is that users often request more than they need \cite{goolge_trace_analysis}.
One the other hand, data center operators oftentimes do not risk their business by allocating resources less than users' requests.

Existing cluster resource managers like Yarn \cite{yarn} and Kubernetes \cite{kubernetes} allocate resources based on users' requests.
Since users rarely know their exact resource consumption when submitting their requests, the requested amount can be more or less than the real usage. 
In fact, our analysis of the Google cluster trace \cite{google-traces} show that the real usage is less than 45\% the requested amount on average.
Our trace analysis in \S\ref{sec:analysis} further highlights the great potential to increase cluster efficiency.

Due to the utmost importance of the problem, it is not surprising to see lots of efforts have been made to mitigate the gaps between resource usage and requested amount.
In particular, oversubscription has been widely used in Yarn \cite{yarn}, Mesos \cite{mesos}, Aurora \cite{aurora}. Specifically, overscription assumes demand peaks of different users are rarely collided and thus multiplexing negatively correlated applications on the same server can accommodate more requests and therefore help improve utilization \cite{meng2010efficient}.  
Despite the benefits, server overloading may occur which results in performance degradation.
Another popular approach is to fill the cluster with low-priority jobs, which is used in resource managers such as Kubernetes and Borg \cite{kubernetes, borg,rose}. 
However, using multiple priorities is not a universal solution to all data centers, and it is often chanllenging to find enough low-priority jobs to fill in the cluster. Moreover, low-priority jobs may imply low value. Therefore, it is desirable to accommodate as many high-priority jobs as possible before using low-priority jobs to increase the utilization. 

In this paper, we focus on improving the cluster utilization while maintaining the quality of service (QoS).
Specifically, given a QoS target, we developed an online resource manager to maximize the cluster utilization while satisfying the QoS target.
Our contributions are summarized as follows. 
\begin{itemize}
    \item \textbf{In-depth Google trace analysis.} We analyze $42$ GB compressed data of a 29-day Google cluster trace to figure out why the cluster is underutilized.
    At the cluster level, the total average usage is 50\% of the total capacity. 
    At the server level, the utilization level varies significantly across servers. 
    At the task level, users often overestimate their demand. 
    Since the resource manager \cite{borg} in Google Cluster relies only on the requested amount for resource allocation and scheduling, it results in load imbalance and low utilization.
    \item \textbf{Solution approach.} 
    Unlike traditional scheduling problems, we formulate a load balancing problem that does not completely rely on requests.
    It additionally takes the real-time usage of each node into consideration.   
    However, the usage of a node is uncertain and it is hard to have a perfect estimator or predictor.    
    In the solution approach, we proposed an online load balancing algorithm \flex that learns from estimation errors and automatically self-adjusts to maintain the QoS above a given target.
    \item \textbf{Google trace based evaluation}. We carry out the evaluation using a Kubernetes simulator on Google cluster trace. The evaluation shows that \flex increases the utilization by up to $1.6\times$ and admits up to $1.74\times$ more requests while maintains the QoS. \flex balances the load accross servers in the cluster and the utilization improvement is significant across a wide range of configurations. 
\end{itemize}
\section{Background \& Motivation}

\subsection{Resource Requests and Task Schedulers}

\textbf{Task Schedulers.}
Most task schedulers rely on resource requests for scheduling.
Some of those are Yarn \cite{yarn}, Aurora \cite{aurora}, Mesos \cite{mesos}, Spark \cite{spark}, Kubernetes \cite{kubernetes}, etc. 
Based on resource requests from a task, the scheduler decides which server the task will be launched on.
Resource requests are commonly decided by users.
Users estimate or randomly pick the resource requests.
Hence, the resource requests can be more or less than the resource demands.


\textbf{Request vs. Demand}.
There are several reasons that existing task schedulers do not use resource demand for scheduling.
One of the major reasons is that task demand is unknown beforehand.
People may want to pick the maximum demand as resource request. 
Since demand happens in the future, precise predictions can be very challenging.
For example, it is not easy to predict the number of users will use on a website before lauching a webserver.
Another reason is performance guarantee. 
Schedulers can guarantee resources based on request but it is hard to guarantee resources based on demand because demand changes over time and resource preemption is expensive.
Schedulers just ensure that the task receives enough resources that it requested.

\textbf{Resource request as maximum limit.} Some resource managers treat resource requests as maximum limit.
Users can enable Yarn \cite{yarn} and Mesos \cite{mesos} to isolate resources using Cgroups \cite{cgroups}.
For example, if an application reaches its limits, the exceeding demand of resources will be throttled.
Fair schedulers like Yarn \cite{yarn} and Mesos \cite{mesos} prefer to use this because it can guarantee fairness among applications and users.
However, it may under-utilize the nodes and it limits the performance of some applications.

\textbf{Resource request as minimum guarantee.}
To improve the utilization and performance of clusters, Kubernetes treats requests as a minimum guarantee.
Basically, an application can use more than resource requests if there are available resources in its allocated machine.
When that application uses up too much resources, it leaves no free resource for other applications.

\textbf{Less dependent on resource requests.} 
Serverless compute platforms like AWS Lambda allow us to run tasks/applications without provisioning or managing servers. 
It reduces the overheads of server management and job configuration.
In serverless computing, it relies less on all resource requests for function-level cloud computing.
For example, the AWS Lambda framework requires only memory requests then it deicides other requests itself. 
AWS Lambda implicitly relies on resource requests that needs to deal with the same problem of using the requests.

In summary, We  belive that future schedulers should not rely on requests because \textit{they are not the actual demand}. 
\subsection{Google Trace Analysis}
\label{sec:analysis}

We analyze the Google cluster trace \cite{google-traces} to show that there is room for utilization improvement. We do analysis from the highest level (cluster level) to the lowest level (task level) to study the inefficiency of the Google cluster in details. 

Google cluster trace \cite{google-traces} was collected in May 2011.
The length of the trace is 29 days.
Borg \cite{borg} is the cluster resource manager for the Google cluster.
There are around 12500 servers and 25 million tasks.
Most of tasks have both resource request and resource usage.
We do not know exactly how many CPU cores or GB RAM that tasks have used because resource usages and requests are normalized to the node that has the largest capacity.
Resource usage is randomly sampled from 5-minute usage.


\textbf{Cluster.} Figure \ref{fig:cluster_usage} plots the total usage and request of the cluster.
The usage and request are normalized to the capacity of the whole cluster.
Both the total requests of CPU and memory are larger than the capacity sometime that means the cluster was over-subscripted.
The average total request for CPU is $1.1$ while the average total request for memory is $0.9$.
Memory is sensitive to applications so it is less over-subscripted than CPU.
The average usage of CPU ($0.43$) and memory ($0.5$) are far under from their capacities $1$. 
However, we believe that the total CPU and memory usages is not the complete proof to the inefficiency of the cluster.
So, we need to look into the details of usage and request at machine level.

\begin{figure}[h]
	\centering
	\subfloat[CPU]{\includegraphics[width=0.5\linewidth]{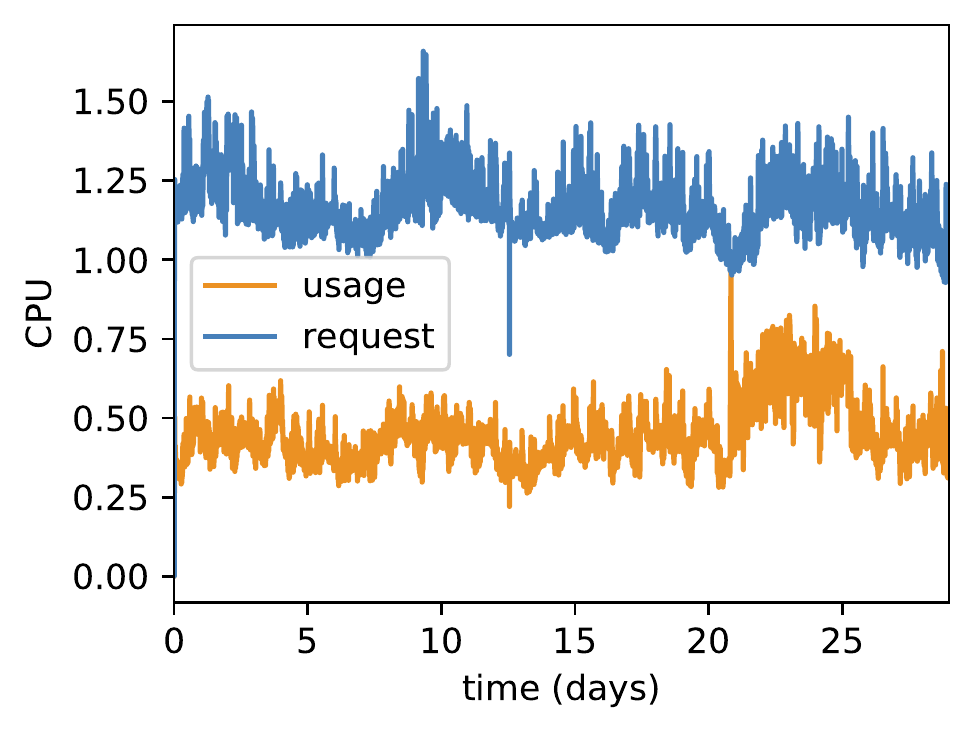}}
	\subfloat[Memory]{\includegraphics[width=0.5\linewidth]{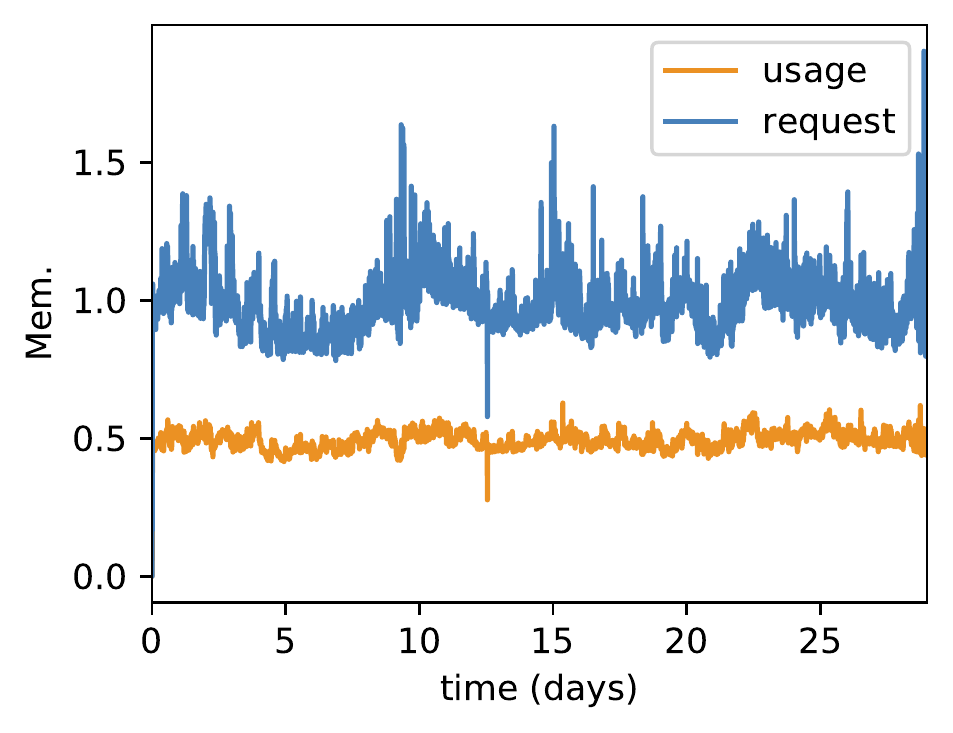}}    
	\caption{[Cluster Analysis] The total usage of cluster is highly underutilized in both CPU ($43\%$) and Memory ($50\%$).}
	\label{fig:cluster_usage}
\end{figure}

\begin{figure}[h]
	\centering
	\subfloat[Machine 1]{\includegraphics[width=0.5\linewidth]{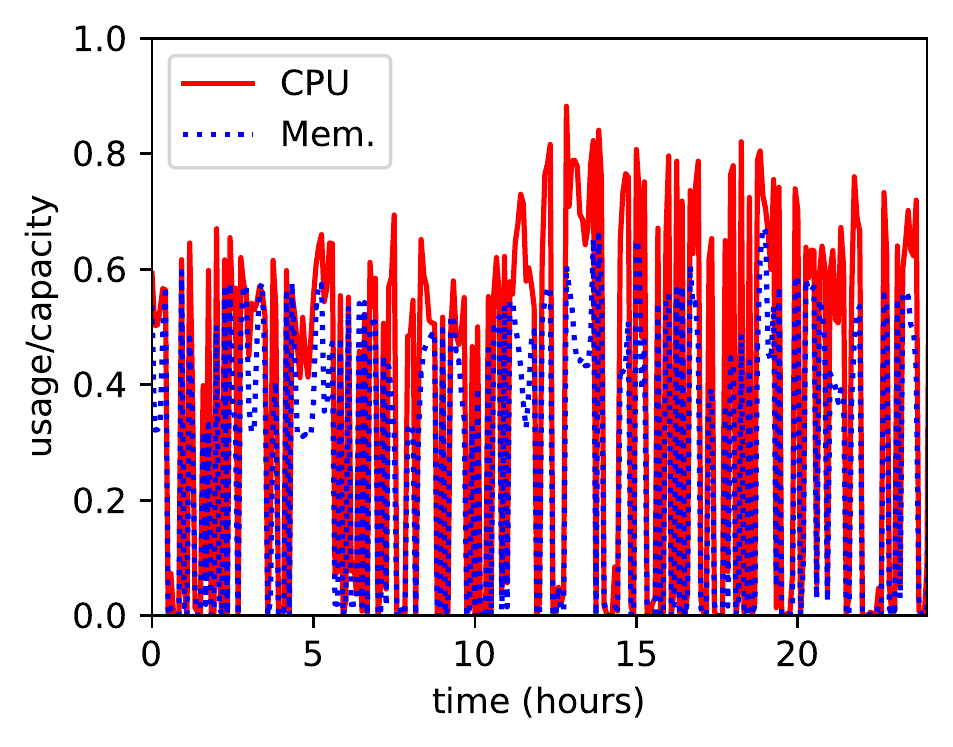}}
	\subfloat[Machine 2]{\includegraphics[width=0.5\linewidth]{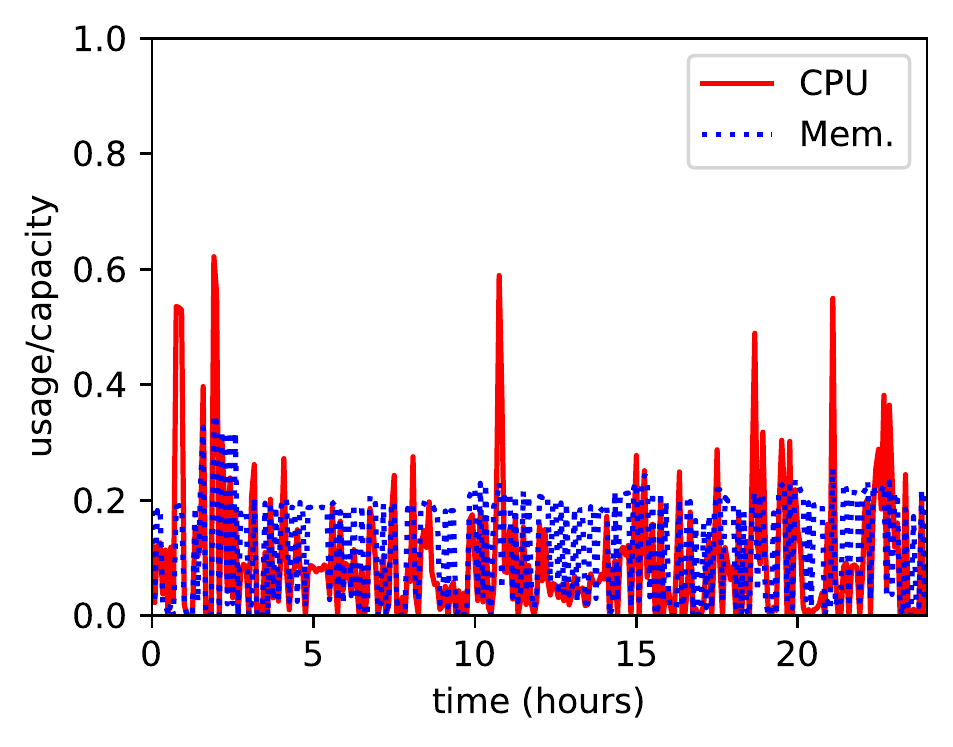}}    
	
	\subfloat[Machine 3]{\includegraphics[width=0.5\linewidth]{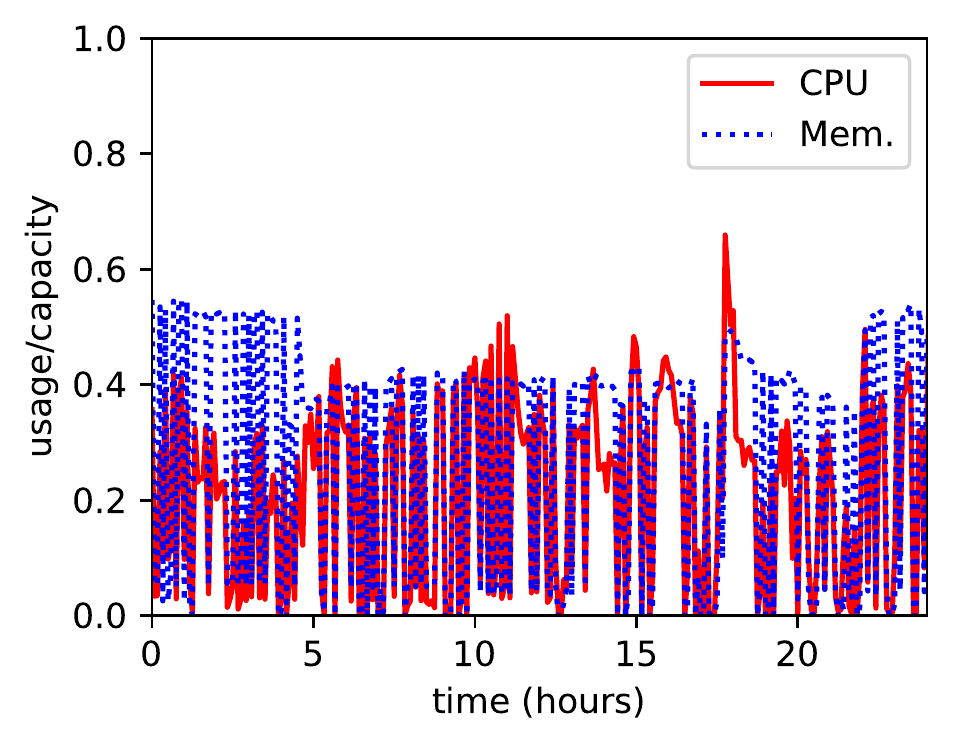}}
	\subfloat[Machine 4]{\includegraphics[width=0.5\linewidth]{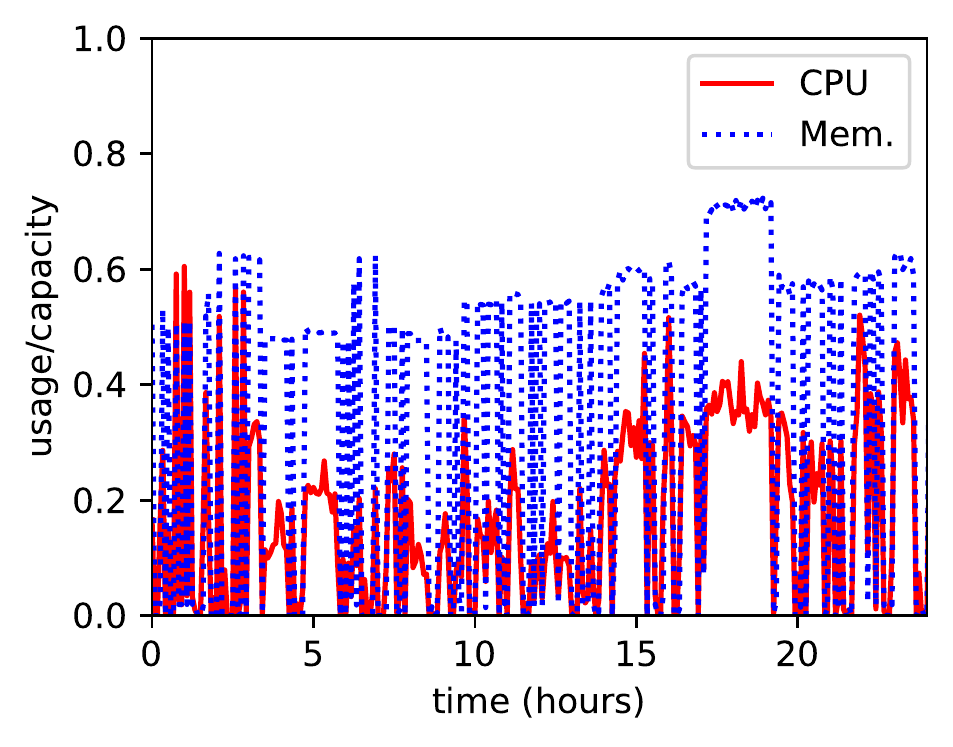}}    
	\caption{[Machine Analysis] The CPU and memory usage on different machines are not balanced.}
	\label{fig:machine_details}
\end{figure}

\textbf{Machine.} Figure \ref{fig:machine_details} plots the CPU and memory usage on four machines. 
Machine 1 has the high load while Machine 2's load is low.
This is not encouraged because the load is not evenly balanced among the servers.
Machine 4 has much greater memory usage than CPU usage but it is opposite to Machine 1.
When one of resources becomes the bottleneck on a server, we cannot allocate any more tasks to that server.

In Figure \ref{fig:machine_usage}, we plot the cumulative density function (CDF) of the usage to request ratios and the usage to capacity ratios at machine level.
Figure \ref{fig:machine_usage}a shows that more than 50\% resources are wasted in roughly 70\% of time.
More interestingly, resources are not used at all in 50\% of time.
In Figure \ref{fig:machine_usage}b, the ratio of CPU usage to request is greater than 1, meaning the usage can be actually larger than the request.
Some tasks (e.g. low priority tasks) must have increased their demand when machines have resource available.

\begin{figure}[h]
	\centering
	\subfloat[usage/capacity]{\includegraphics[width=0.5\linewidth]{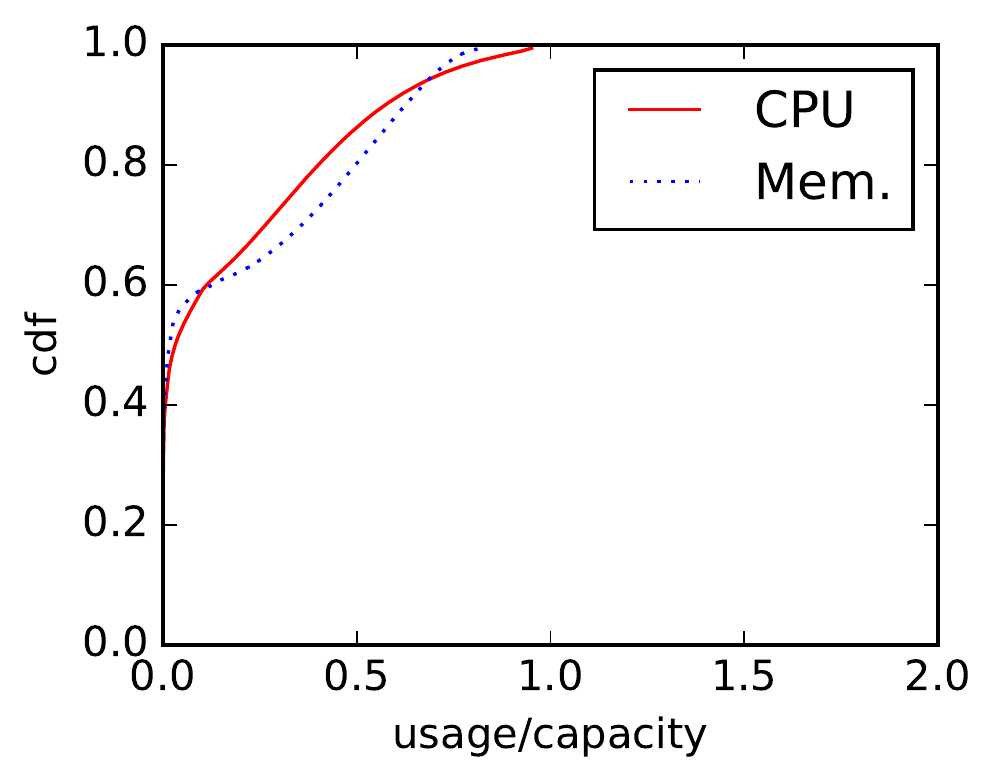}}  
	\subfloat[usage/request]{\includegraphics[width=0.5\linewidth]{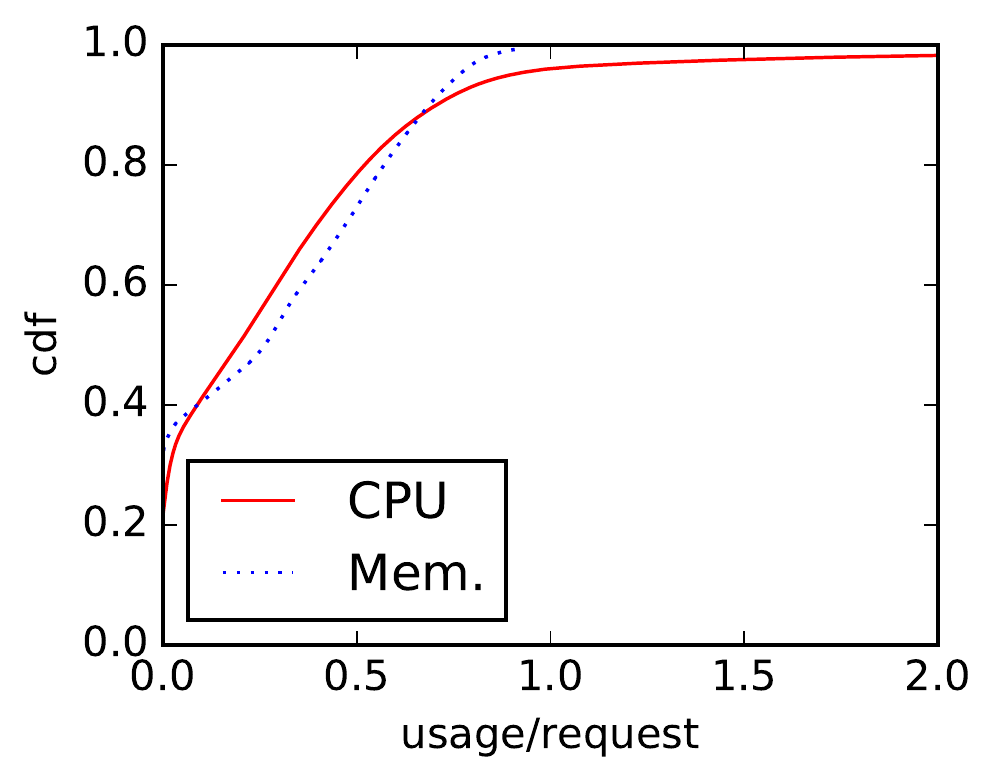}}  
	\caption{[Machine Analysis] Resource usages are less than 50\% the capacity in more than 70\% of time. Especially, resources are totally unused in 50\% of time.}
	\label{fig:machine_usage}
\end{figure}

\textbf{Task.} Figure \ref{fig:task_usage} shows the analysis on tasks.
Interestingly, the mean CPU usage can be much greater than its request while memory usages often stay under their requests (Figure \ref{fig:task_usage}a).
In the traces, the tasks with usage more than request are usually the low-priority tasks.
These tasks are submitted with small requests so they can easilly fit the remaining resources in the nodes. 
When nodes have more idle resources, Borg \cite{borg} allocates resources to the tasks in the best-effort manner so the resource usages go beyond their requests.
The peaks of resource usage can be very high compared to the resource requests (Figure \ref{fig:task_usage}b).
In Figure \ref{fig:task_usage}c, we plot the CDF of the usage standard deviation (std) of task usage normalized to its mean.
Both CPU and memory have a lot of variations.
Although maximum CPU usages are sometime much larger than requests, the standard deviation shows that memory and CPU usages have similar variations.

\begin{figure}[h]
	\centering
	\subfloat[Mean usage]{\includegraphics[width=0.47\linewidth]{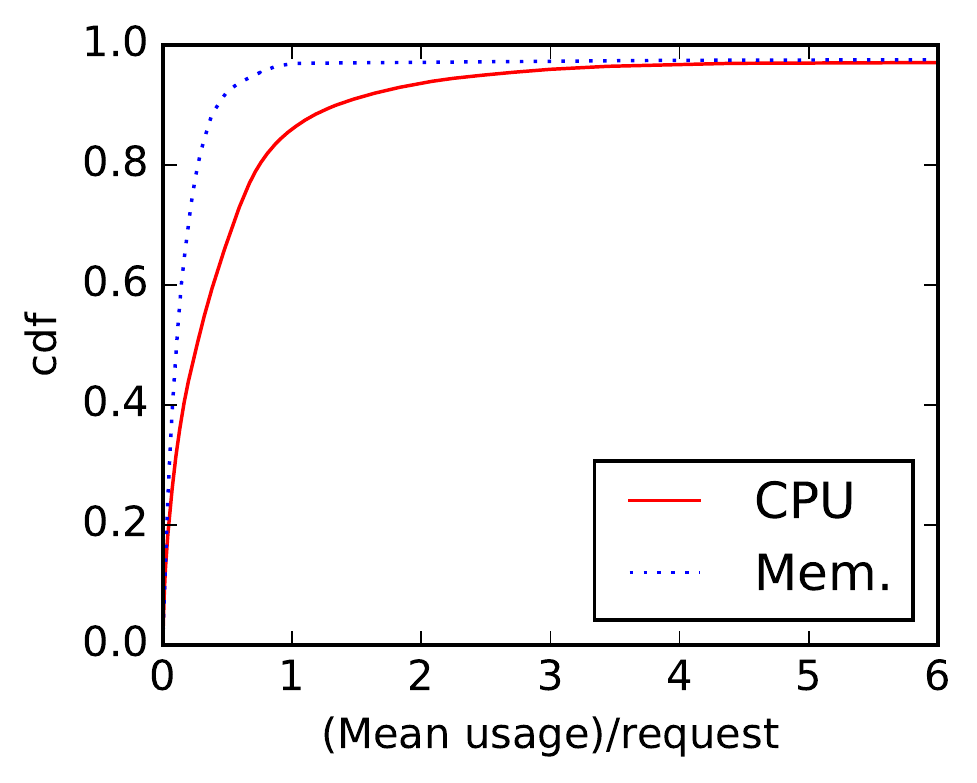}}  
	\subfloat[Max usage]{\includegraphics[width=0.47\linewidth]{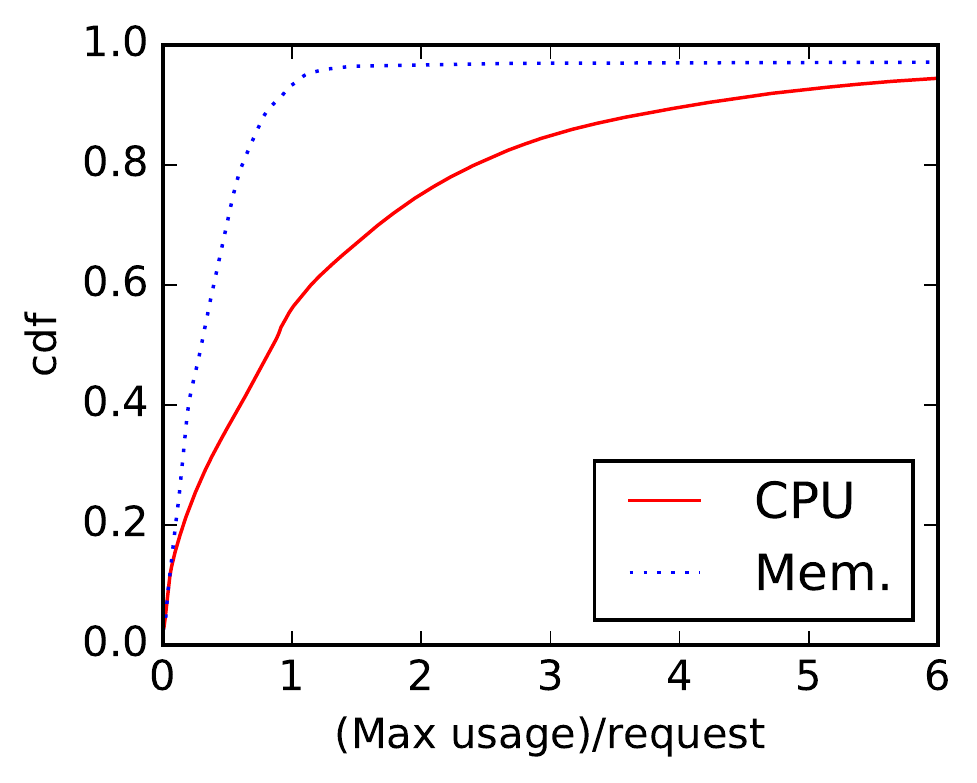}}  
	
	\subfloat[std of task usage]{\includegraphics[width=0.47\linewidth]{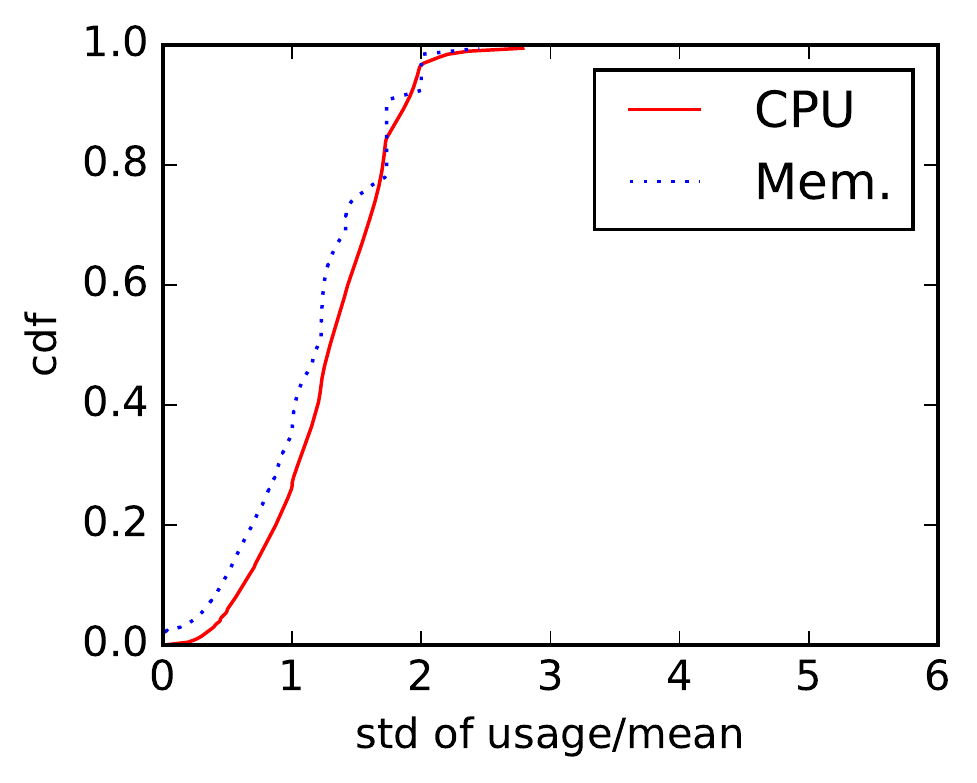}}	  
	\caption{[Task Analysis] Task usages have large variations but their means are often less than their requests.}
	\label{fig:task_usage}
\end{figure}

We classify the tasks into batch jobs, production, and system based on priorities and do analysis on resource usage in Figure \ref{fig:task_usage_p}.
The system tasks have the highest priorities.
The batch jobs have the lowest priorities.
On average, most of tasks use less resource than their requests like Figure \ref{fig:task_usage_p}(a) and (b).
In terms of maximum usage, the system tasks can use much more resource than their requests like Figure \ref{fig:task_usage_p} (c) and (d).
The maximum usages of production tasks are mostly much smaller than their requests. 
While the memory usages of batch jobs are stable, their cpu usages are more aggressive.

\begin{figure}[h]
	\centering	
	\subfloat[Mean CPU usage]{\includegraphics[width=0.47\linewidth]{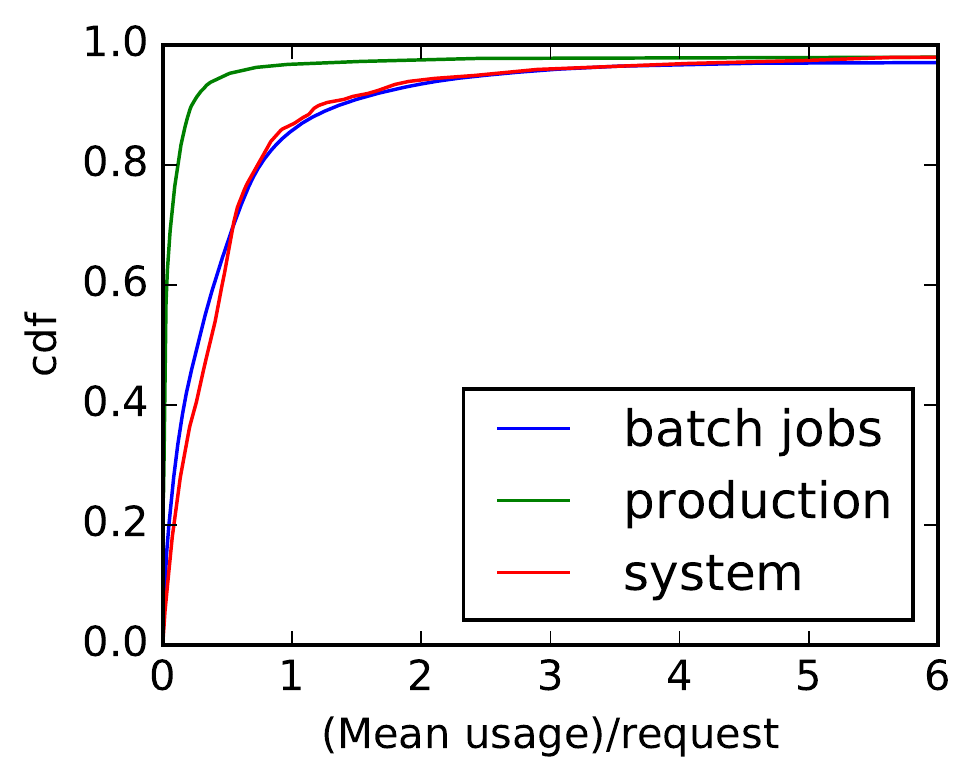}} 
	\subfloat[Mean memory usage]{\includegraphics[width=0.47\linewidth]{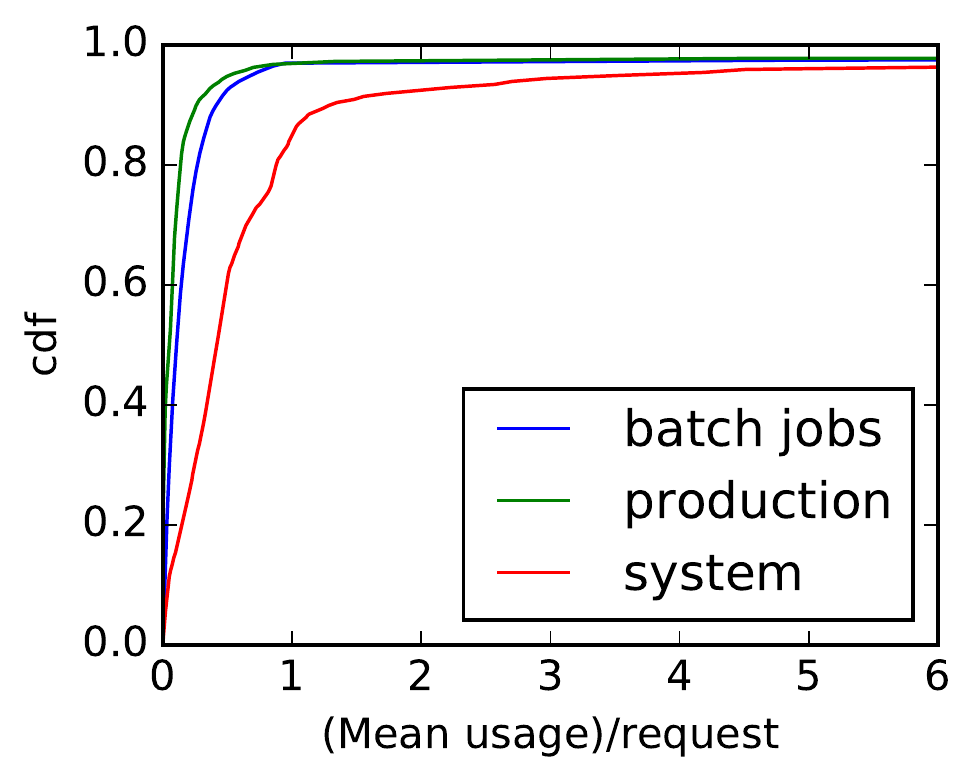}}	 
	
	\subfloat[Max CPU usage]{\includegraphics[width=0.47\linewidth]{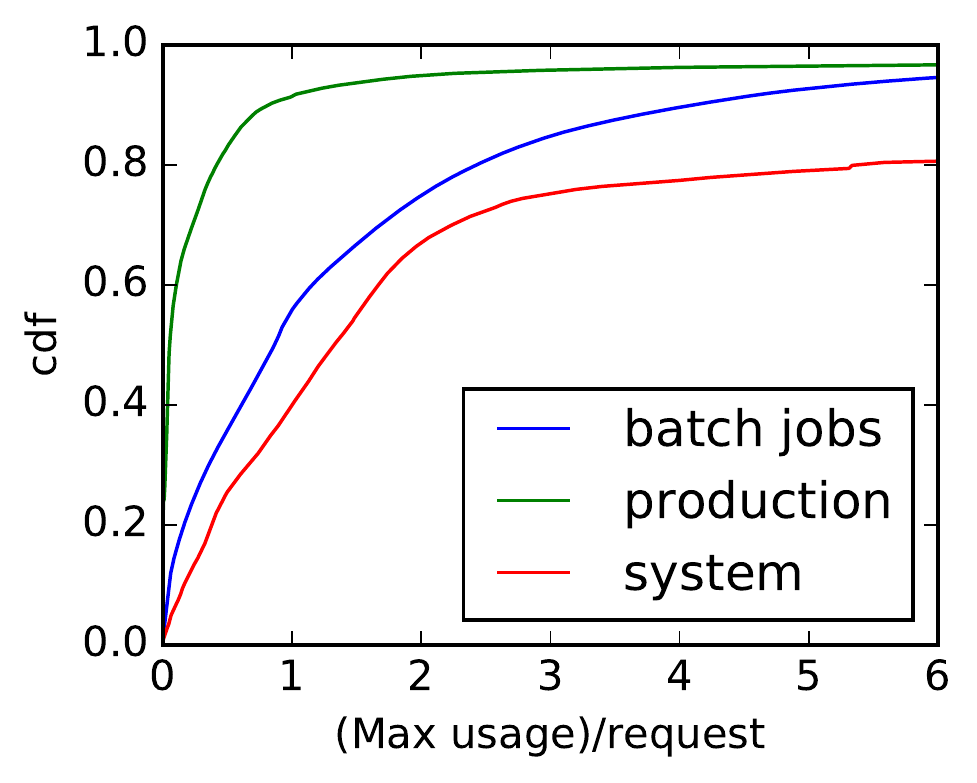}}	  
	\subfloat[Max memory usage]{\includegraphics[width=0.47\linewidth]{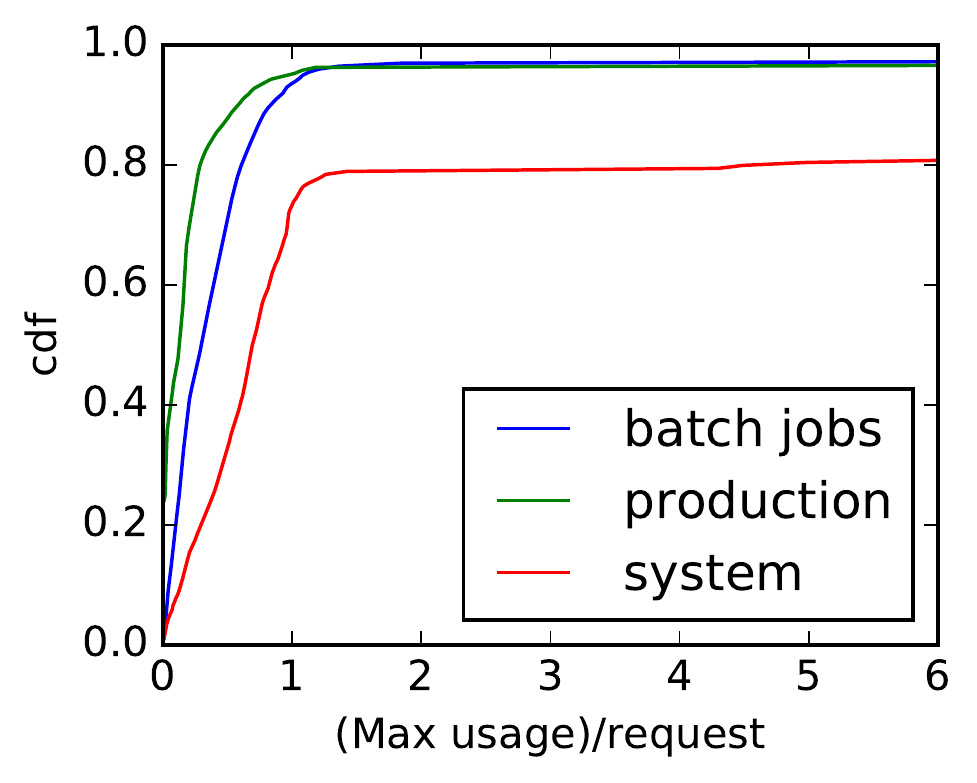}}	  
	\caption{[Task Analysis] The usage of production tasks stays close to the requests while low priority (batch jobs) and high priority (system) tasks use more resource than their requests.}
	\label{fig:task_usage_p}
\end{figure}
\subsection{Motivation}

\textbf{Request based scheduling.} Many modern schedulers like Yarn \cite{yarn}, Mesos \cite{mesos}, Aurora \cite{aurora}, and Kubernetes \cite{kubernetes} are based on requests.
They often schedule tasks using First Fit or Least Fit.
First Fit focuses on speeding up scheduling instead of improving utilization or performance.
It just picks the first node satisfying the resource constraints.
Meanwhile, Least Fit places the task on the node with the least requested resource.
Since First Fit and Least Fit naively rely the resource requests, they cannot reduce the gaps between resource usage and resource allocation.  

From the Google trace analsis \S\ref{sec:analysis}, we could see that the resource usage is far to their capacity.
The major reason behind this is that the resource requests are often more than their resource usage.

\textbf{Oversubscription.} Cluster managers have used oversubscription to admit more requests by overclaiming the cluster capacity.
Meaning, oversubscription tries to compensate the over-request.
In the Google cluster trace, the ratios of demand to request can be distinct as our analysis.  
Oversubscription causes overloads on some nodes that have major of tasks with the ratios of usage to request close to or more than $1$.
Hence, this solution cannot give much improvement.

We also observe in the Google traces \cite{google-traces} that people have used oversubsciption as well as best-effort scheduling to increase the utilization.

\textbf{Overload detection and mitigation.} There have been several efforts on virtual machine allocation that deal with overloads \cite{baset2012towards, nathan2015towards}.
These approaches detect the hot physical nodes and try to move the virtual machines out.
They work well with virtual machines because virtual machines can be live migrated.
The migration overheads are relatively small compared to the life of virtual machines.
However, this approach is not applicable to all tasks because migration often comes with large overheads.
Schedulers like Yarn \cite{yarn} and Kubernetes \cite{kubernetes} do not support migration.
Instead, they preempt the tasks and restart them somewhere else.

\textbf{Problem statement.} \textit{Given the requests with unknown future demand, how to maximize the utilization of a clusters while maintain the QoS target?}
\section{Problem Formulation}
\label{sec:formulation}

\begin{table}[t!]
	\begin{center}
		\caption{General notations.}
		\label{tab:table1}
		\begin{tabular}{|c|l|} 
			\hline
			\textbf{Notation} & \textbf{Description} \\
			\hline \hline
			$N$ & The set of nodes (servers) \\
			$C$ & The capacity of a node \\
			$\theta$ & Oversubsription factor \\
			$R_i$ & Requested resource on node $i$ \\
			$L_i$ & Load on node $i$ \\
			$\hat{L}_i$ & Load estimation on node $i$ \\
			$\bar{L}_i$ & Load information on node $i$ \\
			$U$ & Maximum utilization \\			
			\hline \hline 
			$J$ & The set of tasks \\			
			$\vec{d}_j$ & The vector of the resource demand from task $j$ \\
			$r_j$ & The resource request from task $j$ \\
			$s_j$ & Fair-share of remaining resource for task $j$ \\
			$x_{ij}$ & Placement decision to place task $j$ on node $i$ \\
			\hline \hline
			$q_j(t)$ & Quality of service for task $j$ at time $t$ \\
			$\rho_j$ & Quality of service target for task $j$ \\
			$Q(t)$ & Quality of service for the whole cluster at time $t$ \\
			$\rho$ & Quality of service target for the whole cluster \\
			\hline \hline 
			$P$ & Estimation penalty \\
			$\alpha$, $\beta$ & Estimation penalty update constants \\
			\hline
		\end{tabular}
	\end{center}
\end{table}

In this section, we formulate a scheduling problem for a cluster of nodes (servers).
Given the number of nodes, the task scheduler tries to admit as many tasks as possible.
To get the best performance, a scheduler often spreads the workload to all the nodes.


\textbf{Cluster.} There are $N$ nodes in a cluster.
We assume that all nodes have the same capacity $C$. 
It is straightforward to convert to a general problem with heterogeneous nodes.
We can add pseudo load to the small nodes until their capacities equal to the maximum one.
The total of requested resources on node $i$ is $R_i$.
$R_i$ can be greater than $C$ if the cluster is over-subscripted with the factor $\theta \geq 1$.
\begin{equation}
R_i \leq \theta C
\end{equation}
Meanwhile, the real load (usage) of node $i$ at time $t$ is $L_i$.
For the simplicity of presentation, we ignore the time $t$ in notations.
The real load $L_i$ is bounded by $C$. 
\begin{equation}
L_i\leq C
\end{equation}

\textbf{Tasks.} At time $t$, there are $J$ pending tasks.
Task $j$ has the constant request $r_j$ and the future demand $\vec{d}_j$.
The demand $\vec{d}_j$ is unknown prior scheduling.
$\vec{d}_j$ varies from the time task $j$ scheduled till finished. 

\textbf{Resource allocation.} As the resource allocation process happens after scheduling done, we do not go too much details on resource allocation formulation.
We only consider the allocation for quality of service.
Given demand $d_j$ and request $r_j$, the resource allocation for task $j$ at time $t$ is $r_j + s_j$.
$s_j$ is commonly a fair share (FS) or weighted fair share (WFS) of the remaining available resource.
In this paper, we choose weighted fair share (WFS) as a computer often does weighted fair share for its running applications. 
$s_j$ can be negative when the real demand is less than its request. 
There are three cases.
If the total demand on a node is less than or equal to the capacity, the resource allocation of each task is equal to its demand.
If the total demand is greater than the capacity and the total request is less than the capacity, the allocator guarantees resources for all tasks based on their requests first and then splits remaining resources to all tasks using WFS.
For example, if a task requests 5 CPU cores but demands 6 CPU cores, it receives 5 CPU cores as guaranteed and extra $s_j$ CPU cores using WFS.
If the total demand and the total request are both greater than capacity, the allocator uses weighted fair share twice. 
It does WFS based on resource requests first, then does WFS again based on remained resource demand.

\textbf{Placement.} Let $x_{ij}$ be the decision variable for scheduling task $j$ on node $i$.
If the scheduler decides to place task $j$ on node $i$, $x_{ij}=1$. Otherwise $x_{ij}=0$.

\textbf{Request based load balancing (RLB)).} Existing schedulers do the load balancing based on resource requests.
It minimizes the total request on each node.
We abstract the traditional load balancing optimization as follows.

\begin{align}
	RLB:   & \min_{\mathbf{x}} R    & \\
	\text{s.t.  }
	& R \geq R_i + \sum_{j\in J} x_{ij} r_j &  \forall i \in N \\
	& R \leq \theta C  \\
	& \sum_{i \in N} x_{ij} = 1  & \forall j \in J  \\
	& x_{ij} \in \{0,1\} & \forall i \in N, \forall j \in J 
\end{align}
where $\mathbf{x}$ is the matrix representing for $x_{ij}$ for $i \in N,j \in J$.

Clearly, RLB is inefficient because it uses only resource requests.
As in our trace analysis \S\ref{sec:analysis}, there are big gaps between resource usage and resource request that causes low utilization.
If we increase the oversubscription factor $\theta$ to bridge the gaps, some of the nodes will be easily overloaded.
We suggest not completely relying on resource requests and propose a load balancing based on node usage instead. 

\textbf{Node Usage based Load Balancing (ULB).} The goal is to balance the actual load across all the nodes.
Meaning, we minimize the maximum utilization $U$.
\begin{align}
	ULB:   & \min_{\mathbf{x}} U    & \\
	\text{s.t.  }
	& U \geq \bar{L}_i +  \sum_{j\in J} x_{ij} r_j &  \forall i \in N\\
	& U \leq C  \\
	& \sum_{i \in N} x_{ij} = 1  & \forall j  \in J \\
	& x_{ij} \in \{0,1\} & \forall i \in N,j \in J 
\end{align}
where $\bar{L}_i$ is the load information on node $i$.
As the future demand is unknown, we still use the request $r_j$ in the capacity constraint $U \geq \bar{L}_i + \sum_{j \in J}x_{ij}r_j$.
We note that $\bar{L}_i$ is not the instantenous load $L_i$.
$\bar{L}_i$ should not be simply measured because the load may change overtime and the online optimization problem only captures a single snapshot. 
If $\bar{L}_i$ is too small, we can admit a lot of tasks to the same node but it will result in overloads.
If $\bar{L}_i$ is too large, it causes low utilization. 
The key question here is to where to get $\bar{L}_i$ before scheduling.

\textbf{Load Estimation Penalty.} 
In practice, load can be monitored, estimated, or predicted.
We assume that $\hat{L}_i$ is an estimated load of node $i$ at time $t$.
However, we cannot just rely $\hat{L}_i$ because we do not know how well $\hat{L}_i$ serves in $ULB$. 
To deal with underestimation or overestimation, we propose using estimation penalty $P$.
The idea is to compute $\bar{L}_i$ based on the estimated load $\hat{L}_i$ and the estimation penalty $P$ as follows.
\begin{equation}
	\bar{L}_i = P \hat{L}_i.
\end{equation}
While $\hat{L}_i$ gives us some information about the present and future load, we need to adjust the estimation penalty $P$ to avoid quality of service (QoS) violations.
If $P$ is too large, $ULB$ provides guaranteed QoS and low utilization.
If $P$ is too small, $ULB$ achieves high utilization but violates QoS.
 
\textbf{Quality of Service (QoS).} Let $q_j$ be the QoS of job $j$ at time $t$. 
$q_j=f(r_j,d_j, a_j)$ is defined based real resource usage (allocated) $a_j$ and the resource demand $d_j$ at time $t$. $q_j$ is non-decreasing on $a_j$.
Users requires $q_j$ greater or equal to the task quality target $\rho_j$.
QoS $Q$ of the system at time $t$ is computed as follows.
\begin{equation}
Q(t) = \frac{1}{|J|}\sum_{j \in J}^{} \mathbb{I}_{q_j(t)\geq \rho_j} \geq \rho
\end{equation}
where $\mathbb{I}$ is the indicator function.

\textbf{Challenges.} $ULB$ is an integer programming problem and well known as the NP complete problem.
It is impossible to find a optimal solution for for scheduling as it needs to be done in subsecond for thousands of nodes and millilons of tasks.
So, it requires an efficient and fast algorithm.
Furthermore, we need to pick the estimation penalty $P$ before solving $ULB$.
Since we do not have prior-knowledge of task arrival times and task demands in the future, it is challenging to pick the right estimation penalty $P$.
\section{Solution Approach}
\label{sec:approach}

We break the solution approach into 2 phases:
The first one in \S\ref{sec:loadbalancing_sol} assumes that load estimation is very accurate so we only focus on load balancing.
In the second phase, we deal with the errors from load estimation.
We combine the two phases in the proposed algorithm.
The proposed online algorithm is compatiable with most task schedulers like Kubernetes \cite{kubernetes}, Aurora \cite{aurora}, or Yarn \cite{yarn}.

\subsection{Load Balancing with Precise Load Estimation}
\label{sec:loadbalancing_sol}

We assume that the load of a node $i$ is $L_i$ and it does not change until a new task lands in.
The load balancing now is similar to the problem of parallel machine scheduling (PMS) problem \cite{graham1979optimization}.
PMS schedules a list of jobs into multiple identical machines to minimize the makespan.
There are 3 key differences between our load balancing problem and PMS.
First, our load balancing problem allows multiple tasks to run on a machine at the same time while PMS put tasks sequentially on a machine.
Second, PMS minimizes the makespan while our problem minimizes the maximum load across the cluster.
Third, PMS knows processing times but demand is uncertain in our problem.
Since the load balancing is well known as an NP-complete problem, we are looking for an efficient online algorithm instead of optimal solutions.

When tasks arrive, they are queued up. Some schedulers like Kubernetes \cite{kubernetes} prefer to do the scheduling whenever there is any task in the queue.
Other task schedulers like Yarn \cite{yarn} periodically do scheduling for the tasks on the queue.

\textbf{FIFO Scheduler.} FIFO Scheduler visits each task in a first-in-first-out manner and picks the node with lowest load for that task. 
In practice, we cannot monitor load in a real-time manner because frequent monitoring creates large overheads on the whole system.
Since users often over-request their resource demand, we use their requests $r_j$ in capacity constraints $\bar{L}_i + r_j \leq C$.

\begin{algorithm}
	\small
	\caption{FIFO Scheduler}
	\label{alg:list}
	\begin{algorithmic}[1]
		\Function{FIFOScheduler}{$J$ tasks, $N$ nodes} 
		\State $x_{ij}=0$ $\forall i,j$
		\ForAll{task $j$ in  $J$}
			\State  $\hat{i} = \arg\min_{i} L_i$ 
		\If{$\hat{L}_i + r_j \leq C$} 
			\State $x_{\hat{i}j}=1$
		\EndIf
		\EndFor
			\State Return $\{x_{ij}\}$ 
		\EndFunction
	\end{algorithmic}
\end{algorithm}

The computational complexity of FIFO Scheduler is $O(JN)$. Theorem \ref{theorem:list_approx} shows that the solution is $2\times$ the optimal one in the worst case.
Meaning, it requires at most $2\times$ capacity compared to the optimal solution.

\begin{theorem}
	\label{theorem:list_approx}
	If the capacity $C$ is infinite, FIFO Scheduling is 2-approximation.
\end{theorem}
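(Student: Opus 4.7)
The plan is to adapt Graham's classic list-scheduling argument \cite{graham1979optimization} to this setting. Write $L^*$ for the optimal value of the load-balancing problem (minimum achievable maximum load across the $N$ nodes after all $J$ tasks are placed) and $L$ for the maximum load produced by FIFO. With $C$ infinite, the capacity check $\hat{L}_i + r_j \leq C$ never fires, so the algorithm reduces to: for every task in arrival order, place it on the node with current minimum load. Under precise load estimation we have $\bar{L}_i = \hat{L}_i = L_i$, so both the argmin and the capacity check work on the true load; hence the usual analysis applies verbatim.

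First I would identify a ``witness'' node $i^\star$ on which the final maximum load $L$ is realized, and let $j^\star$ be the last task FIFO assigned to $i^\star$. Writing $L_{i^\star}^{\mathrm{pre}}$ for the load on $i^\star$ just before $j^\star$ was scheduled, the greedy choice guarantees $L_{i^\star}^{\mathrm{pre}} \leq L_k^{\mathrm{pre}}$ for every other node $k$ at that moment. Averaging over all $N$ nodes therefore gives
\begin{equation}
L_{i^\star}^{\mathrm{pre}} \;\leq\; \frac{1}{N}\sum_{k \in N} L_k^{\mathrm{pre}} \;\leq\; \frac{1}{N}\Big(\sum_{k \in N} \bar L_k^{(0)} + \sum_{j \in J} r_j\Big),
\end{equation}
where $\bar L_k^{(0)}$ denotes the load on node $k$ before any new task arrives. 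Since any feasible (in particular, optimal) schedule must distribute this same total mass across the $N$ nodes, the right-hand side is a lower bound on $L^*$, yielding $L_{i^\star}^{\mathrm{pre}} \leq L^*$.

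Second, the final load on $i^\star$ is $L = L_{i^\star}^{\mathrm{pre}} + r_{j^\star}$. A single task's contribution cannot exceed the optimal makespan, because in any schedule the node that hosts $j^\star$ carries load at least $r_{j^\star}$; therefore $r_{j^\star} \leq L^*$. Combining the two bounds,
\begin{equation}
L \;=\; L_{i^\star}^{\mathrm{pre}} + r_{j^\star} \;\leq\; L^* + L^* \;=\; 2 L^*,
\end{equation}
which establishes the claimed $2$-approximation.

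The main obstacle I expect is bookkeeping rather than mathematical depth: the pre-existing loads $\bar L_i^{(0)}$ must be folded into both the algorithm's trajectory and the lower bound on $L^*$ consistently, and one must be careful that ``load'' in the argmin refers to the sum of initial load plus requests of previously placed tasks (not just the new requests). With the precise-estimation assumption of \S\ref{sec:loadbalancing_sol} and infinite capacity, these distinctions collapse and the classical argument goes through cleanly.
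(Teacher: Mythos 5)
Your proof is correct and takes essentially the same route as the paper's: both are Graham's classic list-scheduling argument, bounding the witness node's pre-placement load by the average load (hence by $L^*$) and the last task's size by $L^*$ to get the factor of $2$. The only difference is cosmetic --- the paper phrases the bound in terms of demands $d_j$ rather than requests $r_j$, and omits the explicit averaging step that you spell out --- so your version is simply a more carefully bookkept rendition of the same proof.
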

\begin{proof} Let $j$ is the last task with demand $d_j$ scheduled on node $i$.
Prior to scheduling, node $i$ has the lowest load $L_i-d_j$.
We have the optimal load is $L^* \geq L_i-d_j$ and $d_j \leq L^*$. Hence, $L_i - d_j + d_j \leq L^* + L^* = 2L^*$.
In the worst case, FIFO produces no more than $2\times$ the optimal solution. 
\end{proof}

\textbf{Largest Request First (LRF) Scheduler.}
The issue of FIFO Scheduler is not considering the size of pend tasks.
If tasks can be ordered, we can match the node with the lowest usage with the task with the largest demand.
Since resource demand is unknown, we use resource requests to order the tasks.
We propose the largest request first (LRF) Scheduler in Algorithm \ref{alg:heaviest_load}.
\begin{algorithm}
	\small
	\caption{Largest Request First (LRF) Scheduler}
	\label{alg:heaviest_load}
	\begin{algorithmic}[1]
		\Function{LRFScheduler}{$J$ tasks, $N$ nodes} 
			\State sort $J$ tasks such that $r_0 \geq r_1 \geq \cdots \geq r_J$.
			\State $x_{ij}=0$ $\forall i,j$
		\ForAll{job $j$ in  $J$}
			\State  $\hat{i} = \arg\min_{\mathbf{i}} L_i$ 
		\If{$\hat{L}_i + r_j \leq C$} 
			\State $x_{\hat{i}j}=1$
		\EndIf
		\EndFor
			\State Return $\{x_{ij}\}$ 
		\EndFunction
	\end{algorithmic}
\end{algorithm}

Algorithm \ref{alg:heaviest_load} sorts the pending tasks based on resource requests in the descending order.
Then, it finds the node with the lowest load for each task. 
By this way, the task with the largest request is sent to the node with the lowest usage.

The computational complexity of LRF is $O(Jlog(J))$ or $O(JN)$.
If there are too many pending tasks, LRF may be not efficient.
Theorem \ref{theorem:largest_approx} shows that LRF Scheduler achieves $4/3$ of the optimal solution if the order of task requests is the same as the order of task demands. Meaning, it can achieve at least $75\%$ of the optimal utilization.

\begin{theorem}
	\label{theorem:largest_approx}                               
	If the capacity $C$ is infinite and the order of resource requests is the same order of resource demands, LRF scheduling is $4/3$-approximation.
\end{theorem}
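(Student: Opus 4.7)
The plan is to recognize that, under the hypothesis that the order of requests matches the order of demands, LRF is exactly the classical Longest Processing Time (LPT) rule applied to the demand-based load: the sort in line~2 of Algorithm~\ref{alg:heaviest_load} is equivalent to sorting by $d_j$, and with infinite capacity the admission test is vacuous, so the algorithm reduces to ``assign the next largest task to the currently least-loaded machine.'' The bound $L_{\mathrm{LRF}} \leq \tfrac{4}{3} L^*$ is then Graham's classical LPT guarantee, and I would carry the proof over in the standard case-analysis form.

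\textbf{Main steps.} First I would pick a node $i$ achieving the maximum load $L_{\mathrm{LRF}} = L_i$, and let $j$ be the last task placed on it, with demand $d_j$. By the LRF rule, at the moment $j$ was scheduled, node $i$ had the smallest current load, hence $L_i - d_j \leq \frac{1}{N}\sum_{k \in N} L_k \leq L^*$, using that the total demand is at most $N L^*$. Next I would split on the size of $d_j$. In the easy case $d_j \leq L^*/3$, simply adding the two bounds gives $L_i \leq L^* + L^*/3 = \tfrac{4}{3} L^*$, and we are done.

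\textbf{Main obstacle.} The hard case is $d_j > L^*/3$. Since LRF processes tasks in decreasing demand order, every one of the first $j$ tasks then has demand strictly larger than $L^*/3$, and consequently any machine in the optimum carries at most two such tasks (three would force load exceeding $L^*$). In particular $j \leq 2N$. The main technical step, and the part I expect to take the most care, is to show that on such an ``at most two per machine'' instance LPT is already optimal: after the first $N$ tasks fill the machines one each (necessarily optimally), the remaining $j-N$ tasks are paired with existing ones, and LPT's greedy rule pairs the next-largest residual task with the machine holding the currently smallest task, which coincides with the optimal largest-with-smallest pairing (verified by a straightforward exchange argument on any optimal schedule). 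This yields $L_{\mathrm{LRF}} = L^* \leq \tfrac{4}{3} L^*$ in the hard case as well, completing the proof. Everything else (the averaging bound $L_i - d_j \leq L^*$ and the case split) is routine once the reduction to LPT is set up.
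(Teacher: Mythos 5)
Your proposal is correct and follows essentially the same route as the paper: both are Graham's classical LPT analysis, built on the averaging bound $L_i - d_j \le L^*$ and the case split on whether $d_j \le L^*/3$. The only difference is in how the hard case $d_j > L^*/3$ is closed out --- you argue that LPT reproduces the optimal largest-with-smallest pairing via an exchange argument (a step that still needs care, since one must rule out LPT ever placing a third such task on a machine), whereas the paper derives a counting contradiction showing the $m$ singleton tasks cannot absorb any of the remaining $2(N-m)+1$ small tasks; both variants establish optimality of the schedule in that case.
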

\begin{proof} Let us order $j$ tasks such that their demands $d_1 \geq d_2 \geq \cdots \geq d_j$.	
	We can assume $j$ causes the highest load $L_i+d_j$.
	If not, we keep removing $j$ from the set until we find one. 
	The removal does not change the solution of the algorithm but it decreases the optimal solution $L^{*}$.
	Since the bound holds for the assumed case, it applies to other cases with larger $L^{*}$.
	So the highest load is computed as 
	$$L_i + d_j \leq \frac{1}{N} \sum_{k \neq j}d_k + d_j \leq L^{*} + d_j.$$
	
	Now we need to prove that either $d_j \leq \frac{L^*}{3}$ or $L_i + d_j$ is the optimal solution when $d_j > \frac{L^*}{3}$.
	If $d_j \leq \frac{L^*}{3}$, the bound is clearly held.
	We use contradiction to prove the other case.
	We assume that $d_j > \frac{L^*}{3}$ and $L_i + d_j$ is not the optimal solution. 
	Assume task $l$ is the first task that makes the load on node $k$ greater than optimal, $L_k + d_l > L^*$.
	We only consider the schedule for tasks from $1$ to $l-1$.
	There is at least one task on each node.
	$d_j$ is the smallest task, and $d_j > \frac{L^*}{3}$ so there are at most two tasks on each node.
	Assume that the algorithm schedules one task to $m$ nodes  and two tasks to $N-m$ nodes.
	Clearly, let us call tasks on one-task nodes large tasks and tasks on two-task nodes small tasks.
	Task $l$ cannot be scheduled with a large task because it would create a new load greater than the optimal solution. 
	As task $l$ is the smaller than other scheduled tasks, no small or large tasks can be scheduled with a large task.
	To not violate the optimal solution, $m$ nodes schedule $m$ large tasks while  $N-m$ nodes schedule $2(N-m)+1$ small tasks.
	This is impossible because each node has at most 2 tasks.
	So, if $d_j > \frac{L^*}{3}$ and $L_i + d_j$ is the optimal solution.
	
	In summary, the solution is bound by $\frac{4}{3}L^*$.
\end{proof}

\textbf{Discussion.} 
In the case of "task by task" online scheduling, LRF Scheduler behaves like FIFO Scheduler because there is only a single task for each scheduling.
If there are multiple tasks in the queue, LRF Scheduler is expected to offer better utilization than FIFO Scheduler.
However, LRF relies on the assumption that the order of requests is consistent with the order of demands.

\subsection{Dealing with Estimation Errors}
\label{sec:errors}

In practice, the current load $L_i$ of node $i$ can be monitored but it is not very useful due to many two reasons.
The first reason is that load may quickly change in the future.
The second reason is that it is too expensive to monitor the load frequently. 
Since there are no perfect predictors or estimators, errors are unavoidable.
When the load is underestimated, FIFO Scheduler or LRF Scheduler admit too many tasks into a node. 
Overload can be a fatal issue.
For example, operation systems kill some of services if the memory is overloaded.
In constrast, overestimation causes low utilization.

Instead of relying on the accuracy, we focus on how to deal with estimation errors.
Estimation errors are unpredictable.
An estimator works well on this workload but it may not work well on another workload.
It is hard to know which error level the system can tolerate.
The best error level at this time may not work at another time slot because the demands and systems are both dynamic.

Given any estimator, how to incorporate it into our solution approach?
The idea is to adjust the estimation penalty $P$.
We can pick the lower bound for P to prevent it from being too small $P>P_{min}$.
We do not adjust $P$ according to estimation errors.
It is because present errors are not very useful while the future errors are uncertain.  
Furthermore, small errors at a node do not mean that the estimator works well for another node.
Instead, we learn from the QoS of the whole cluster.
If the QoS is violated, it is a signal that the scheduler have to be less agressive.

We borrow this idea from the congestion control mechanisms in computer networks for controlling the estimation penalty $P$.
The idea of updating rule for $P$ is as follows. 
We keep monitoring the QoS of the whole system.
If QoS is violated, we quickly reset $P$ back to a larger penalty $P=P+\beta(P-P_{min})$ where $\beta$ is positive.
So, it stops the aggressiveness of admitting more tasks quickly.
If QoS is acceptable, we keep reducing the penalty $P=\alpha P$ gradually where $\alpha \in (0,1)$.   
\S\ref{sec:algorithm} presents more details how we design the updating rule for online task schedulers. 

\subsection{Online Algorithm Design}
\label{sec:algorithm}

We designed \flex algorithm for schedulers like Kubernetes \cite{kubernetes}.
When a user submit his tasks, they will be queued up in the queue.
If there is a task in the task queue, it is immediately popped out for scheduling.
If the task cannot be scheduled, Kubernetes scheduler sets the back-off time for scheduling retry.

In practice, there are multiple resources.
For the sake of presentation simplicity, we considered only single resource scheduling in the problem formulation \S\ref{sec:formulation}.
It is straighforward to convert single resource scheduling to multiple resource scheduling.
First, contraints have to be met for each resource dimension.
For scheduling rules that cannot apply to all resource dimensions, we can pick one of the dimension instead. 
For instance, we can sort the tasks based on only memory request or the dominant resource.

We present pseudo code of \flex algorithm in Algorithm \ref{alg:proposed_algorithm}.
\textit{OnJobArrival} function processes the new arrival tasks.
\textit{ScheduleOne} is used in Kuberntes for scheduling tasks one by one from the pending queue.
The scheduler runs \textit{PeriodicEstimationPenaltyUpdate} periodically to update the estimation penalty $P$.

\textit{OnJobArrival} function puts a new task into the pending queue.
We can choose using FIFO or PriorityQueue.
If we use FIFO queue, the algorithm is based on the FIFO Scheduler as in Algorithm \ref{alg:list}.
If we use PriorityQueue, the algorithm works like LRF Scheduler in Algorithm \ref{alg:heaviest_load}.
Basically, we have two versions of Flex: FlexF (FIFO) and FlexL (LRF).
FlexF does not order the tasks while FlexL uses the priority queue to order the tasks from the largest memory request to the smallest request.

If there is a task pending in the queue, the scheduler invokes \textit{ScheduleOne}.
It takes the first task from the queue.
\textit{ScheduleOne} filters out all the nodes that does not meet the capacity constraint $P \hat{L}_i + r_0 \leq C$.
If there are more than one nodes remaining, it scores each node.
The rule is based on the load and the running tasks in each node.
The rule prefers the node with low load.
It also prefers the node with fewer tasks from the same source as the new task.
It is because the tasks from the same source would more likely have peaks at the same time.
To increase the speed of scheduling, we implement the filtering and scoring functions in a parallel manner.
The parrallel implementation can speed up the algorithm $p$ times.
$p$ depends on the computational power of the scheduler computing unit (e.g. CPUs or GPUs).
Finally, \textit{ScheduleOne} places the new task to the node with the highest score.

If we cannot find any node for the new task, the scheduler puts the task in the end of the queue or reports the failure error to the task owner.

Given the QoS targer $\rho$, \textit{PeriodicEstimationPenaltyUpdate} periodically updates the estimation penalty $P$.
When QoS is greater than $\rho$, it keeps reducing $P=P\alpha$ gradually where $\alpha < 1$.
The value of $\alpha$ depends on how frequent we update $P$.
If the updating period is very short, $\alpha$ has to be closer to $1$.
$P$ is lower-bound by $P_{min}$ to avoid some special cases.
For example, there are no new tasks in the queue so QoS keeps being great.
Furthermore, it prevents underestimation from happening.
We can pick $P_{min}>1$ if we expect that the estimator would not overestimate the load too much.
When QoS is less than the target $\rho$ and keeps decreasing, it is urgent to stop the aggressiveness of \flex.
There are many ways to do this.
For instance, it quickly increases: $P = P + \beta (P-1)$ where $\beta$ is a constant greater than $0$ and $P_{min}>1$.
When $\beta=1$, it doubles the overestimation, $P = 1+2(P-1)$.

\begin{algorithm}
	\caption{Flex}
	\label{alg:proposed_algorithm}
	\begin{algorithmic}[1]
		\Function{OnJobArrival}{task $j$} 
			\State Put task $j$ in queue $J$, $J$ can be FIFO or Priority Queue.
		\EndFunction
		\\
		\Function{ScheduleOne()}{} 
		\State Pick the first task $0$ in the queue $J$
		\State Filter nodes: list nodes that meets capacity: $P \hat{L}_i + r_0 \leq C$
		\If{There are more than 1 filtered nodes}
			\State Score nodes: based on load estimation and their current running tasks. 
			\State Pick node $i$ with the highest score. Dequeue task $0$
			\State Dequeue task $0$.
			\State Place task $0$ on node $i$.			
		\EndIf		
		\If{task $0$ cannot be scheduled}
			\State Dequeue task $0$ and add back to the end of the queue.
		\EndIf
		\EndFunction
		\\
		\Function{PeriodicEstimationPenaltyUpdate}{}
			\If{$Q(t) > \rho$}
				\State Reduce penalty: $P = \max(P\alpha, P_{min})$
			\ElsIf{$Q(t) < \rho$ and $Q(t) < Q(t-1)$}
			    \State Increase penalty: $P = P + \beta (P-1)$				
			\EndIf
		\EndFunction
	\end{algorithmic}	
\end{algorithm}

The computational complexity of Flex algorithm: $O(N/p)$ where $N$ is the number of nodes and $p$ is the number of parralel threads using in filtering and scoring.
\section{Evaluation}
\label{sec:evaluation}

In this section, we evaluate \flex using Google cluster trace \cite{google-traces} to show that it is better than existing modern schedulers in terms of utilization while maintains the QoS target.

\begin{figure*}[t]
	\centering
	\subfloat[Nomalized Total Request]{\includegraphics[width=0.33\linewidth]{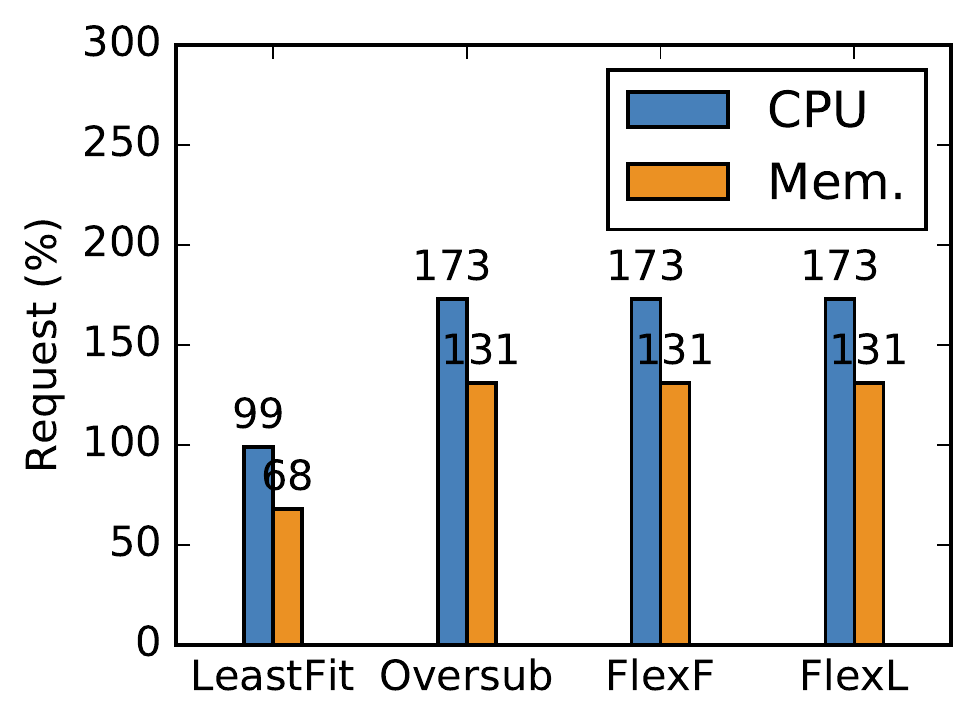} } 
	\subfloat[Nomalized Total Usage]{\includegraphics[width=0.33\linewidth]{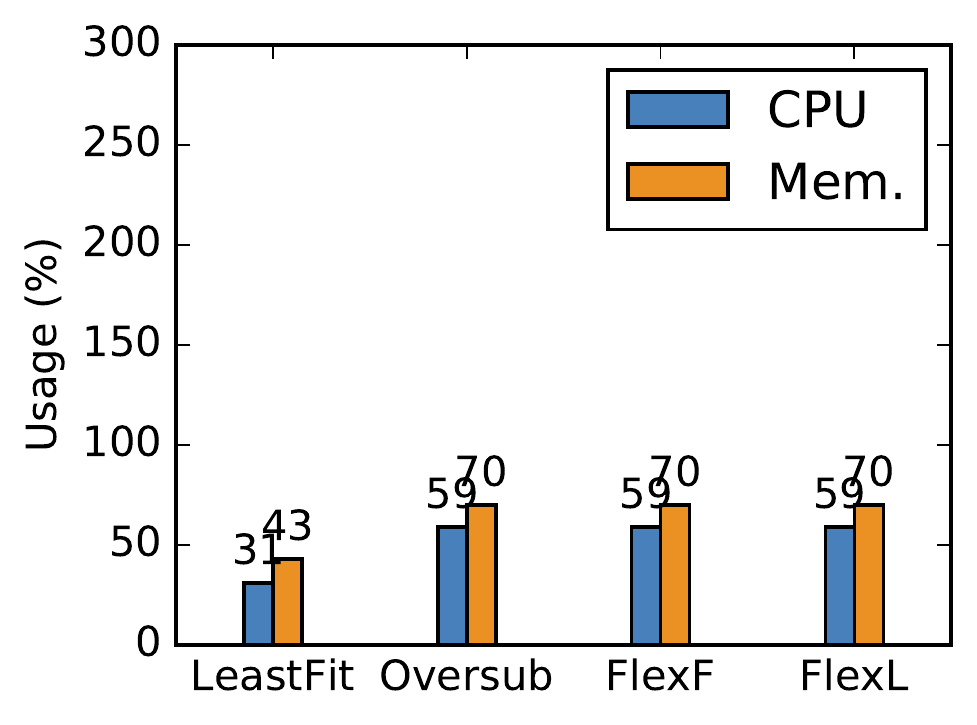} } 	
	\caption{Proposed methods \flex{F} and \flex{L} both achieve very high utilization which is similar the oversub with oversubscription factor 2. The cluster utilization using \flex{F} and \flex{L} is 70\%, $1.6\times$ of LeastFit's.}
	\label{fig:utillization}
\end{figure*}

\subsection{Setup}

\textbf{Simulator.} We extend Kubernetes Cluster Simulator \cite{k8s-cluster-simulator} that is very close to the real Kubernetes code base. The simulator was designed to test customized schedulers before trying on the real Kubernetes cluster.
Since the simulator APIs are shared with Kubernetes, developers can develop their own schedulers, plug it into the simulator for testing.

\textbf{Configuration.} There are 4000 nodes. Each node has 64 CPU cores and 128 GB RAM. We vary the number of nodes from 3000 to 4000 in the sensitivity analysis.

\textbf{Workload trace.} We submit tasks to the simulated cluster using the task requests and usages from Google Cluster Trace \cite{google-traces}.
We submit around $714,030$ tasks within 24 hours to the simulator.
Since real demand of tasks are not available, we use their usage samples as resource demand.

\textbf{Methods.} We compare \flex with Least Fit (LeastFit) and oversubscription (Oversub).
\begin{itemize}
\item \textbf{LeastFit} allocates the job $j$ to the node $i$ in an online manner such that it minimizes the maximum of requested resources across the nodes.
LeastFit are featured in Kubernetes \cite{kubernetes} and Aurora \cite{aurora}.
\item \textbf{Oversub} combines oversubscription and LeastFit.
The oversubscription factor is $2$ and does load balancing like LeastFit does.
Most of modern resource managers like Yarn \cite{yarn} and Mesos \cite{mesos} support oversubscription.
\item \textbf{\flex{F} and \flex{L}} are the two versions of the proposed algorithm \ref{alg:proposed_algorithm}.
\flex{F} uses FIFO queue while \flex{L} uses Priority Queue to prioritize the tasks with larger memory requests first.
They start with prediction penalty $P=1.5$, $P_{min}=1$.
The estimation penalty updating constants are $\alpha=0.99$ and $\beta=1$.
\end{itemize}

\textbf{Quality of Service (QoS).} We assume that users require $a_j \geq d_j$ or $a_j \geq r_j$ for each job.
If one of aforementioned conditions satisfied at time $t$, $q_j(t)=1$.
The QoS of the cluster is $Q(t) \geq 0.99$. 

\textbf{Estimator.} We use a simple load estimator in this evaluation.
We assume that the future demand does not change much.
Meaning, we monitor and use the current resource usage for scheduling.
We show that our proposed algorithm works well even with this simple estimator.

\textbf{Metrics.} We use resource utilization and QoS as the main metrics for the evaluation.

\subsection{Resource Utilization}

To evaluate the resource utilization of four methods, we compute the request and usage of the whole cluster.
They are normalized to the total cluster capacity.
While request represents for the amount of workload the cluster handles, usage is the actual utilization of the cluster.
Figure \ref{fig:utillization} plots the request and usage of the four methods.
\flex{F} and \flex{L} admit the most requests and are similar to the Oversub with oversubscripting $2\times$ of the cluster capacity.
Since they admit more $74\%$ than LeastFit, \flex{F}, and \flex{L} handle more demand and increase the utilization of cluster up to $1.6\times$.

\subsection{Quality of Service (QoS)}

\begin{figure*}[!t]
	\centering
	\subfloat[Cdf of quality of service (QoS)]{\includegraphics[width=0.33\linewidth]{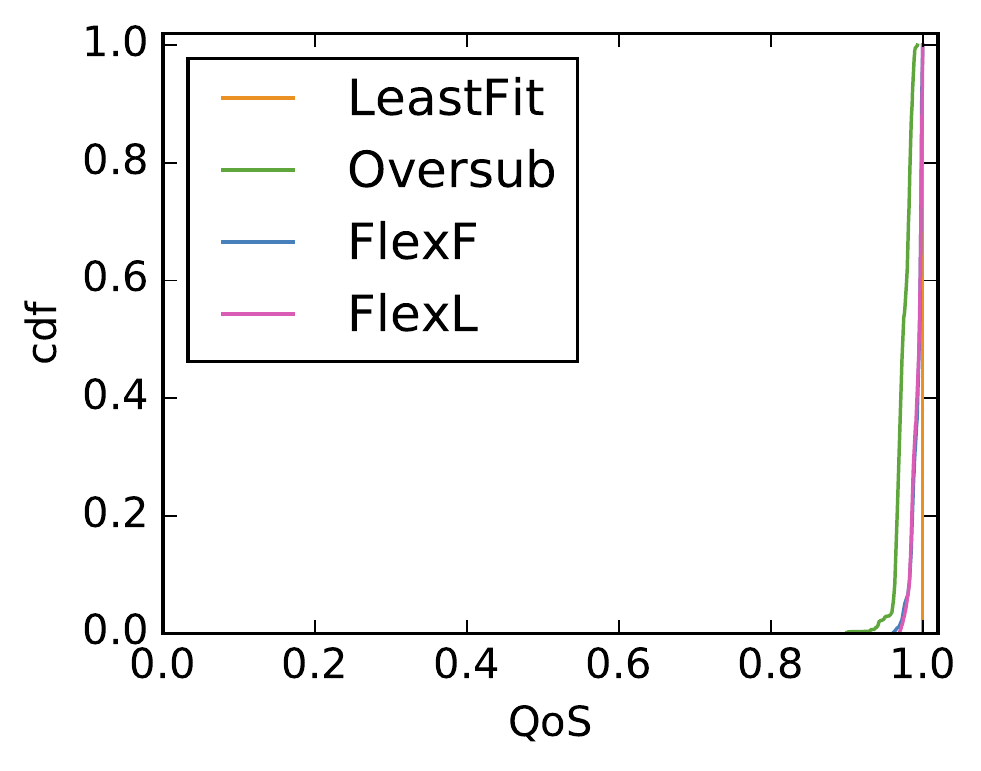} } 	
	\subfloat[QoS violations]{\includegraphics[width=0.33\linewidth]{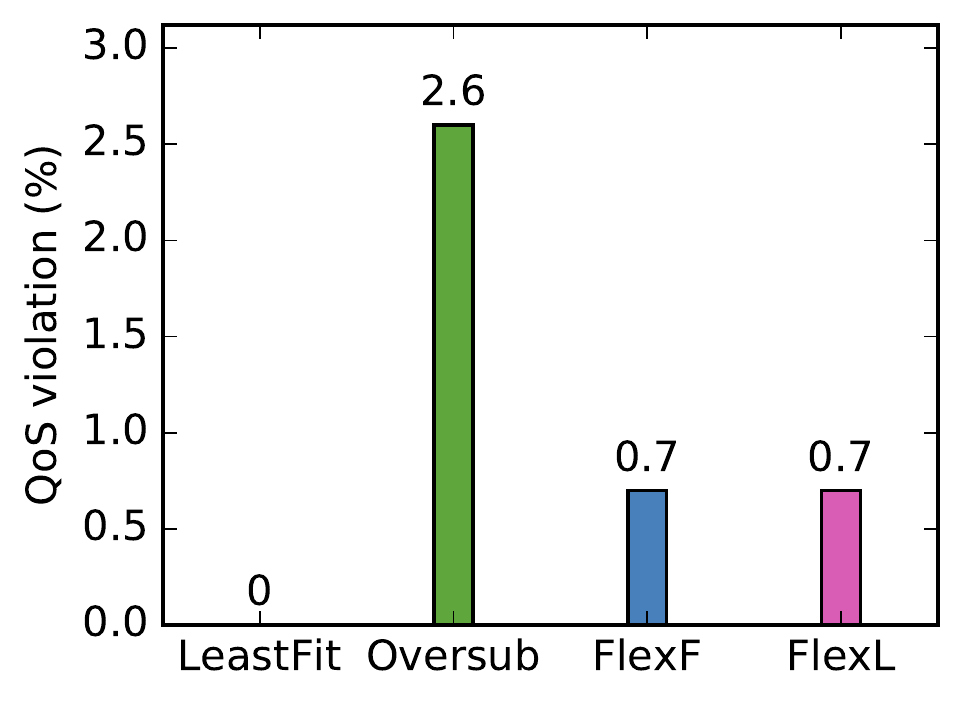} } 
	\caption{Both proposed methods maintain 99\% QoS gurantee while Oversub violates performance guarantee.}
	\label{fig:qos}
\end{figure*}

To evaluate the Quality of Service (QoS), we plot the cumulative density function (cdf) of QoS over time and the percentage of QoS violations in Figure \ref{fig:qos}.
In Figure \ref{fig:qos}a, \flex{F} and \flex{L} are better than Oversub although they have the same utilization.
Clearly, LeastFit has the least utilization but it is the best in terms of QoS because it does not over-admit requests.
Figure \ref{fig:qos}b shows the average percentage of QoS violations.
Both \flex{F} and \flex{L} are $3.7\times$ better than Oversub.

\subsection{Estimation Penalty}

To understand why \flex{F} and \flex{L} are better than Oversub in terms of QoS, we plot the QoS and the estimation penalties over time in Figure \ref{fig:penalty}.
The QoS of Oversub suffers from large drops but cannot quickly recover.
Meanwhile, the QoS of \flex{F} and \flex{L} also have some drops but they quickly recover to maintain the QoS target.
The changes of estimation penalty explain this.
For example, when there is a QoS is less than target (99\%) at around 1 hour, the estimation penalty immediately goes up to stop the aggressiveness of admitting more requests.
It stops making QoS worse and waits until QoS is better.

\begin{figure}[!htb]
	\centering
	\subfloat[Qualify of service (QoS)]{\includegraphics[width=0.7\linewidth]{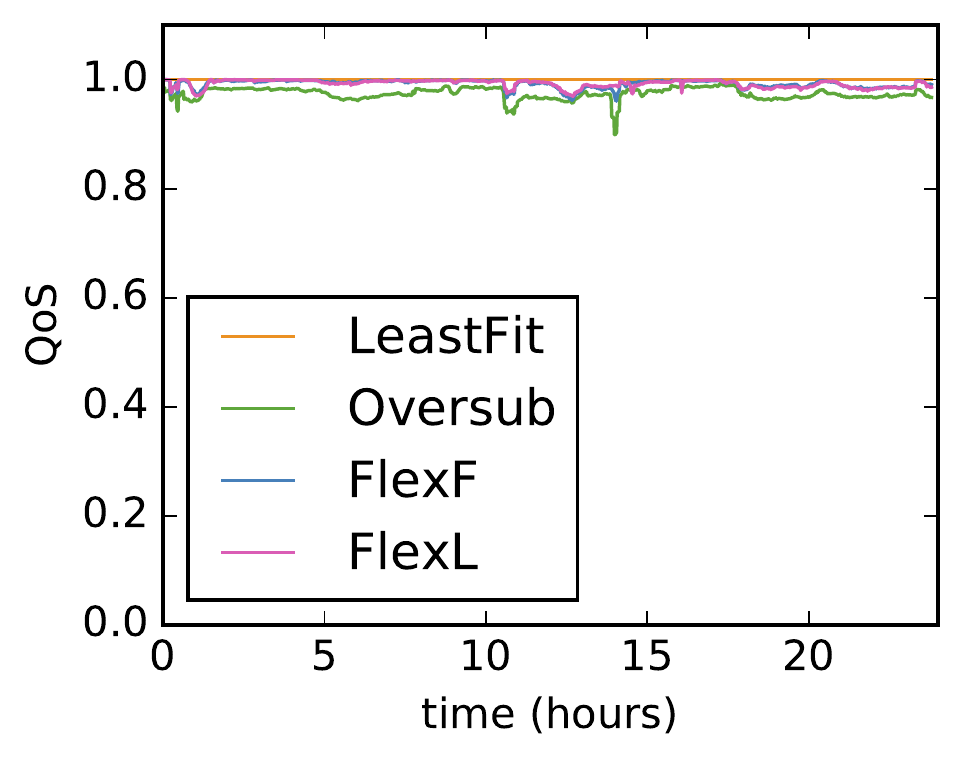} } 
	
	\subfloat[penalty]{\includegraphics[width=0.7\linewidth]{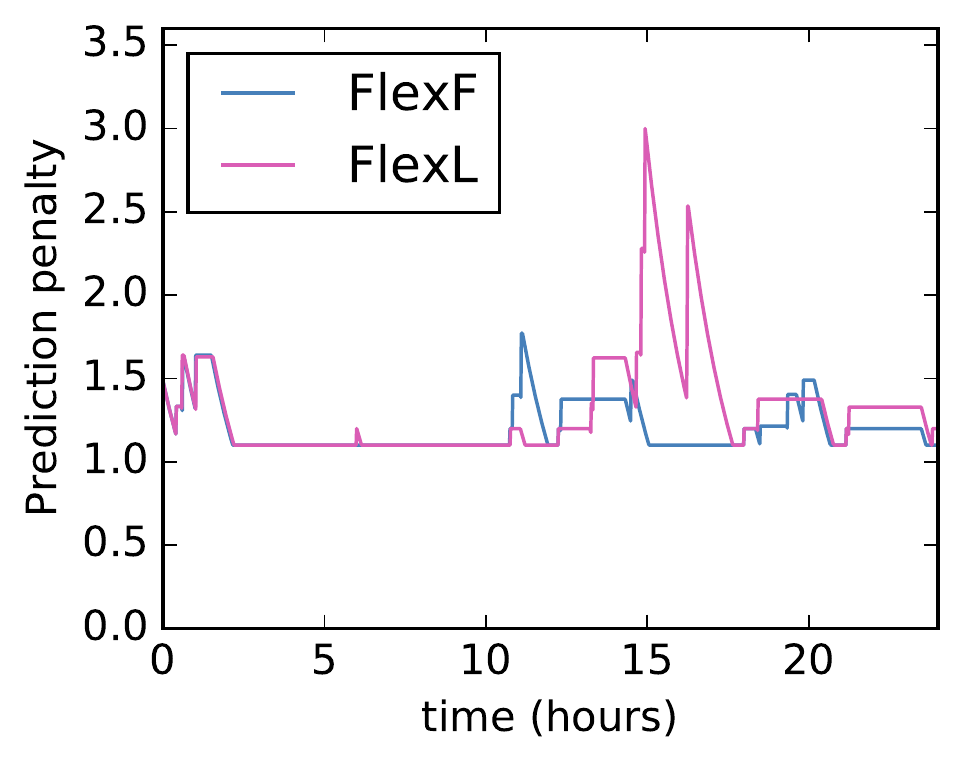} } 
	\caption{The estimation penalties of \flex{F} and \flex{L} immediately react to the QoS degradation.}
	\label{fig:penalty}
\end{figure}

\subsection{Load Balancing}

To evaluate how well \flex{F} and \flex{L} do load balancing, we compute the standard deviation (std) of memory usage across nodes over time in Figure \ref{fig:stdmemusage}.
The standard deviation is normalized to the usage mean.
Small standard deviations mean the load are spread well to all nodes.
The standard deviations \flex{F} and \flex{L} are less than standard deviations of LeastFit and Oversub.
Meaning, \flex{F} and \flex{L} maintains the best load balance among the four methods.
Oversub is the worst because it admits more requests into a node than LeastFit.

\begin{figure}[!htb]
	\centering
	\includegraphics[width=0.7\linewidth]{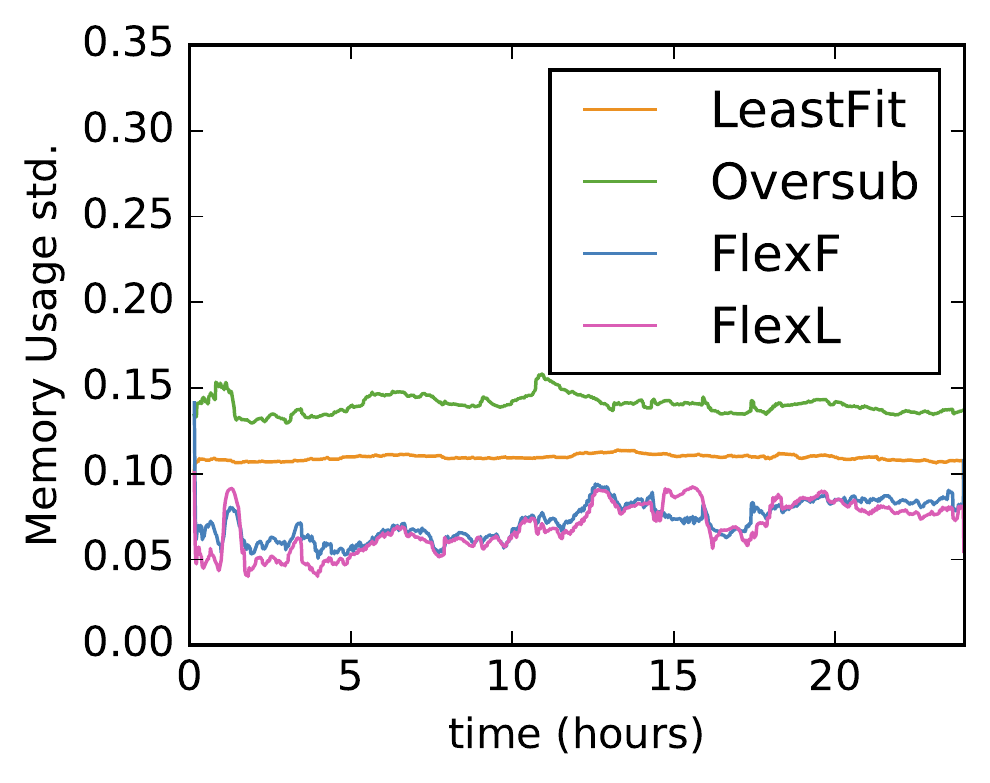}
	\caption{\flex{F} and \flex{L} are better than LeastFit and Oversub in terms of load balancing.}
	\label{fig:stdmemusage}
\end{figure}

\subsection{Sensitivity Analysis}
In this section, we carry out the sensitivity analysis on cluster size and the ratios of demand to request, 

\begin{figure}[!htb]
	\centering
	\subfloat[Average total request]{\includegraphics[width=0.7\linewidth]{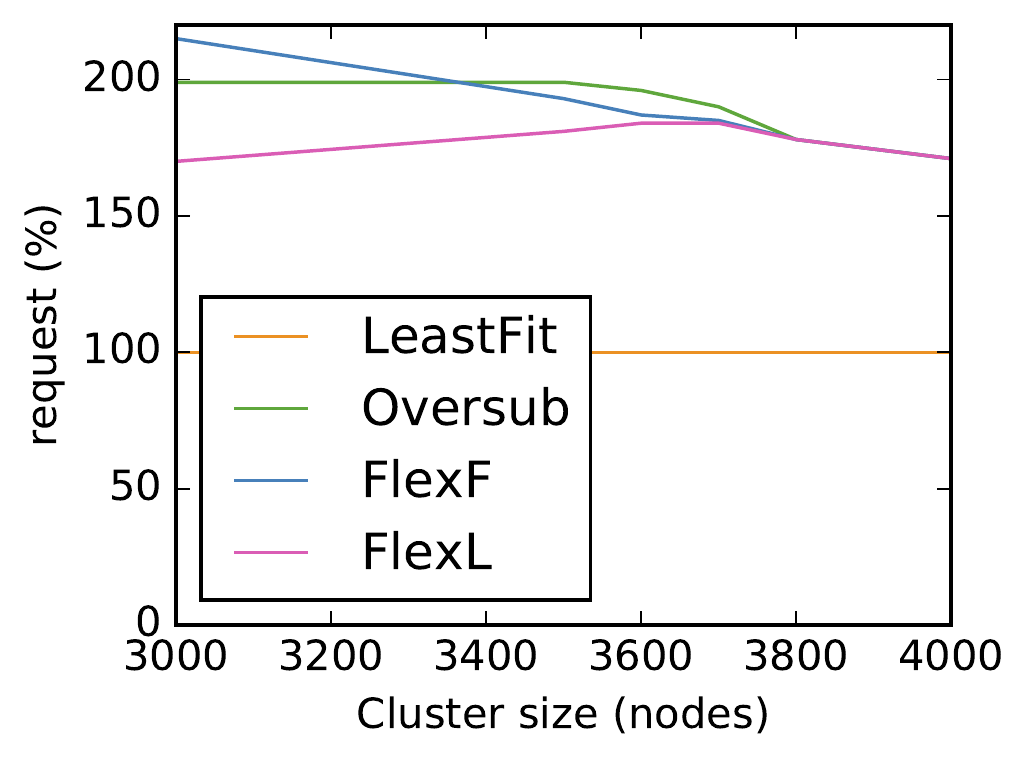} } 
	
	\subfloat[Quality of service (QoS)]{\includegraphics[width=0.7\linewidth]{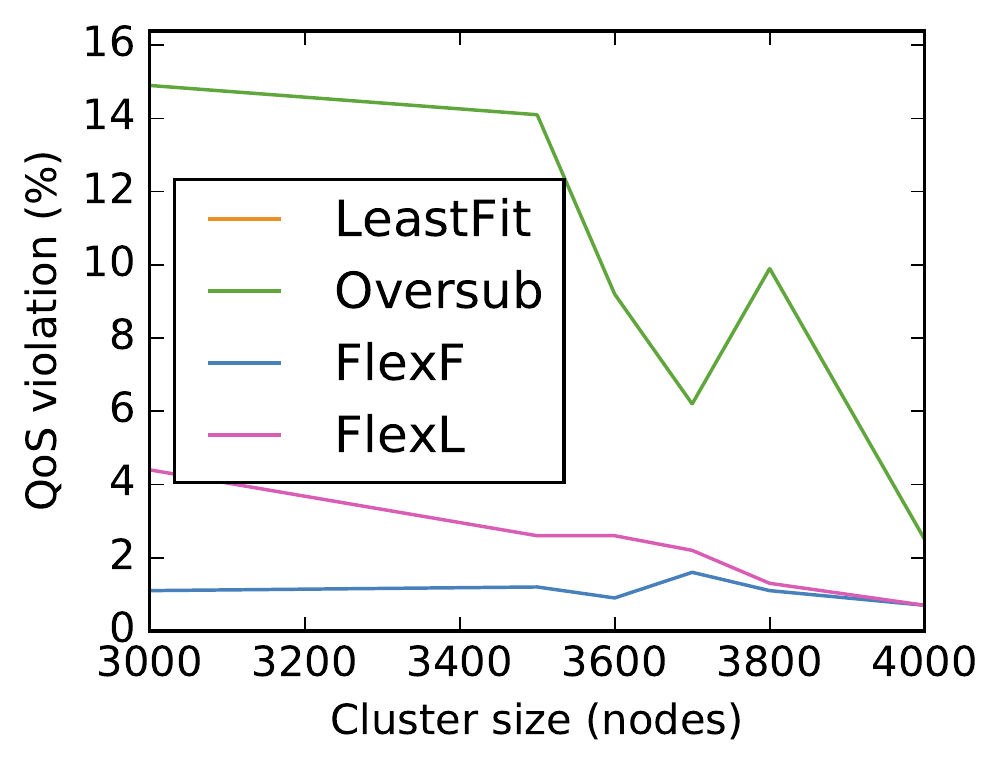}} 
	\caption{[Sensitivity] In terms of QoS, both \flex{F} and \flex{L} significantly outperform Oversub at the small cluster size (3000) while it gives a similar utilization to Oversub.}
	\label{fig:clustersize}
\end{figure}

\textbf{Cluster size.} To study the impact of low load or high load on utilization and QoS, we vary the cluster size from 3000 to 4000 nodes.
Figure \ref{fig:clustersize} shows that the utilization of both \flex{F} and \flex{L} work well and similar to Oversub.
In terms of QoS, Oversub is the one suffering the most at the small cluster size (3000).
Under high load, Oversub tries to pack more tasks into nodes than it can handle so overloads more often.
Both proposed algorithms do not suffer much from the change of cluster size. 
\flex{F} is better than \flex{L} because \flex{L} prefer tasks with larger requests which have larger variations in demand.

\textbf{Ratios of demand to request.} 
To study the impact of ratios of demand to request on performance, we scale up and down the demand but do not change the request in Figure \ref{fig:ratios}.
The goal is to simulate the cases that users estimate their requests differently.
Since Oversub does not consider the real demand, it suffers the most from the QoS violation when we scale up the demand (1.5).
In contrast, \flex{F} and \flex{L} are less aggressive on admitting more requests when the demand scale is high.
Meaning, \flex{F} and \flex{L} are less dependent on the ratios of demand to request.
Hence, \flex{F} and \flex{L} still maintain the small percentages of QoS violation.

\begin{figure}[!htb]
	\centering
	\subfloat[Average total request]{\includegraphics[width=0.7\linewidth]{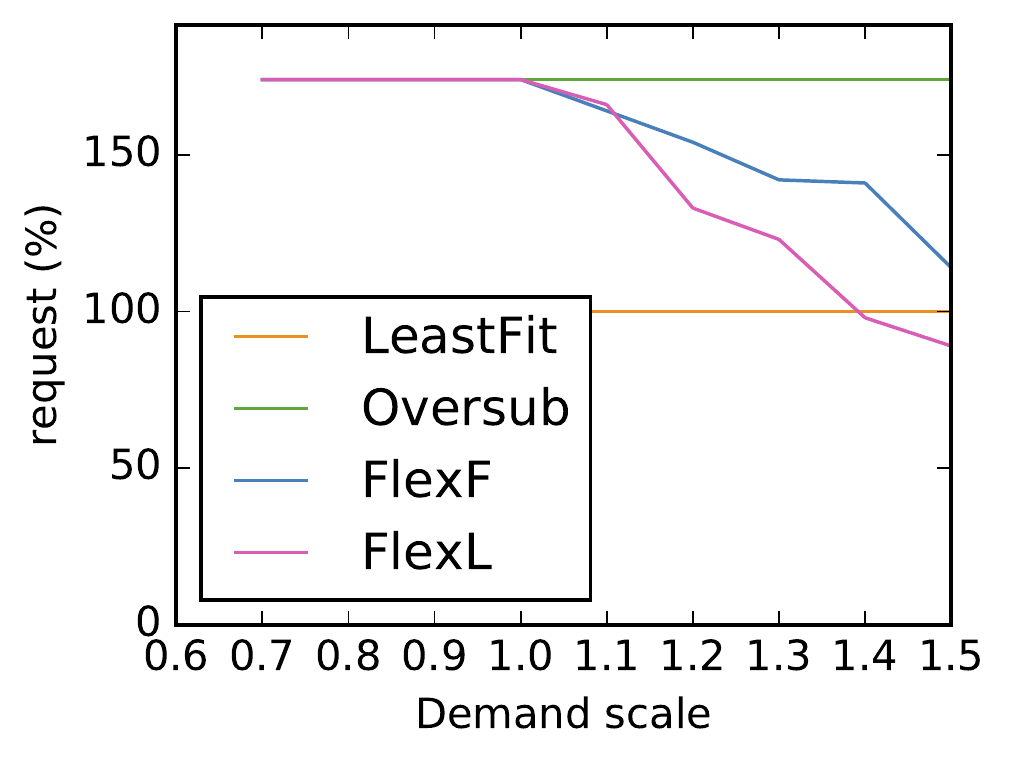} } 
	
	\subfloat[Qualify of service (QoS)]{\includegraphics[width=0.7\linewidth]{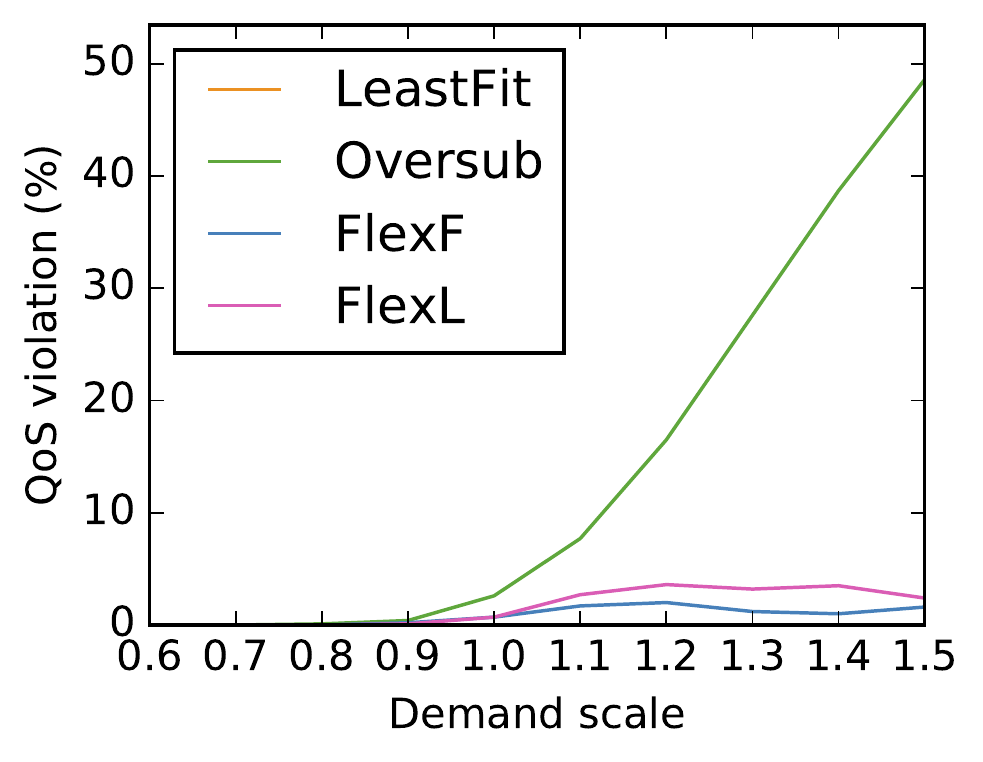}} 
	\caption{[Sensitivity] The large demand to request ratios significantly degrade the QoS of Oversub but not \flex{F} and \flex{L}.}
	\label{fig:ratios}
\end{figure}
\section{Related Work}

\textbf{Request based Schedulers.} Existing schedulers in Yarn \cite{yarn}, Mesos \cite{mesos}, and Kubernetes \cite{kubernetes} rely on resource requests for scheduling.
As in the Google trace \cite{google-traces}, requests are often overestimated that leads to low utilization.
Supprisingly, most of modern schedulers like DRF \cite{drf}, Carbyne \cite{carbyne}, HUG \cite{hug}, a nd BoPF \cite{bopf} still rely on requests.
Some other systems like AlloX \cite{allox} estimate the requested resource themselves.
However, estimating resource usage is still very challenging to apply to all types of workloads.

\textbf{Multiple priorities.} Data centers classify workload into multiple priorities.
The goal is to not only guarantee resources but also improve the utilization of the whole cluster.
Schedulers treat the low priority and long tasks like analytics in a best-effort manner.
For example, Kubernetes \cite{kubernetes} or Borg \cite{borg} can allocate more resource than requests to tasks when the nodes have free resources.
To utilize the idle resource, Rose \cite{rose} consolidates the low utilization machines with low priority workload like speculative tasks.

\textbf{Oversubscription} was proposed to deal with over-estimate use of resources \cite{tomas2014autonomic, williams2011overdriver, breitgand2012improving}.  Clusters can take more virtual resources than the physical resources. 
These approaches assume that applications can tolerate some degrees of oversubscription so a node can be allocated with more requests.
Oversubscription may cause the overload on some nodes and fail to provide performance guarantee. 

\textbf{Multiplexing.} Multiple VMs can be multiplexed in the same physical machine when their usage peaks do not collide \cite{meng2010efficient,tomas2013improving, chen2011effective, nandi2012stochastic}. 
The idea is to pick the best set of VMs with negative correlation of resource usages to be placed in the same physical machine.
However, this requires statistic knowledge of the VM resource usage which is commonly unknown.
It is even harder for acquiring the resource usage of computing tasks ahead.

\textbf{Overload detection.} There are many techniques dealing with overloads in virtual machine allocation.
Since the resource usage of VMs are uncertain, overload detection and mitigation are commonly used in VM oversubscription \cite{baset2012towards, nathan2015towards}.
Detecting overload is based on monitoring and predicting.
Wood proposed Black-box detects hotspot by using thresholds and Gray-box predict hotspots by linear regression \cite{wood2007black}.
Similarly other prediction methods are used like exponentially weighted moving average (EWMA) \cite{xiao2012dynamic}.
CloudScale \cite{cloudscale} assumes the resource usages of applications in a VM is known so it can predict the overload can happen.

\textbf{Overload mitigation.}
Overload mitigation can be done by shutting down VMs or live migration. 
Shutting down VMs is expensive for old VMs while live migration is more acceptable because VMs live a long time \cite{clark2005live, travostino2006seamless}.
The live migration overhead could be large if they use a lot of disk and memory. 
If the VMs use network drives, the overhead is much smaller. 
Although there have been several efforts on overload mitigation, it is still not applicable for all workloads such as short tasks.
\section{Conclusion}

In this paper, we developed an online resource manager \flex to maximize the cluster utilization while maintaining the QoS.
Our Google trace analysis shows that the Google cluster has low utilization although the cluster is already oversubcripted. 
Instead of relying on users' requests, we formulate an online scheduling problem based on load estimation and developed \flex that combines both load balancing and feedback control.
Google trace driven evaluations show that \flex achieves significantly higher resource utilization compared to conventional schedulers while maintaining the QoS.
\section{Acknowledgements}
This research is supported by NSF grants CNS-1730128, 1919752, 1617698, 1717588 and was partially funded by MSIT, Korea, IITP-2019-2011-1-00783.
\label{EndOfPaper}


\bibliographystyle{include/ACM-Reference-Format}
\bibliography{include/refs}


%

\end{document}